\renewcommand{\vec}[1]{{\bf #1}}
\renewcommand{\L}{{\mathcal L}}
\newcommand{\Z}{{\mathds Z}}
\newcommand{\ep}{\epsilon}
\newcommand{\vep}{\varepsilon}
\newcommand{\C}{{\mathcal C}}
\newcommand{\B}{{\mathcal B}}
\newcommand{\T}{{\mathcal T}}
\newcommand{\I}{{\mathcal I}}
\newcommand{\N}{{\mathcal N}}
\newcommand{\IN}{{\mathbb N}}
\newcommand{\R}{{\mathbb R}}
\newcommand{\Ps}{{\mathcal P}}
\newcommand{\one}{{\mathds 1}}
\newcommand{\SNR}{{\rm SNR}}
\newcommand{\vectil}[1]{{\bf \tilde #1}}
\newcommand{\defi}{\triangleq}
\newcommand{\avg}{{\rm avg}}
\newcommand{\goesto}{\rightarrow}
\newcommand{\st}{:}
\newcounter{constcount}
\newcounter{numcount}
\newenvironment{lemmarep}[1]{\noindent {\bf Lemma #1.}\begin{it}}{\end{it}}
\newcommand{\eqnum}{\stackrel{(\roman{numcount})}{=}\stepcounter{numcount}}
\newcommand{\leqnum}{\stackrel{(\roman{numcount})}{\leq\;}\stepcounter{numcount}}
\newcommand{\geqnum}{\stackrel{(\roman{numcount})}{\geq\;}\stepcounter{numcount}}
\newcommand{\cnt}{$(\roman{numcount})$\;\stepcounter{numcount}}
\newcommand{\rescnt}{\setcounter{numcount}{1}}
\newcommand{\IDFT}{\text{IDFT}}
\newcommand{\DFT}{\text{DFT}}
\newcounter{thmcnt}
  \let\Oldsection\section
\renewcommand{\section}{\stepcounter{thmcnt}\Oldsection}
\renewcommand{\subset}{\subseteq}
\newtheorem{theorem}{Theorem}
\newtheorem{lemma}{Lemma}
\newenvironment{remark}{\noindent \emph{Remark:}\,}{\vspace{2mm}}
\newcounter{examplecounter}
\newenvironment{example}
{
\stepcounter{examplecounter} {\vspace{4mm} \noindent \bf Example \arabic{examplecounter}.} \begin{it} \rm
}
{
\end{it} \vspace{2mm}}
\newcounter{defncounter}
\newenvironment{definition}
{
\refstepcounter{defncounter} {\vspace{4mm} \noindent \bf Definition \arabic{defncounter}.} \begin{it} \rm
}
{
\end{it} \vspace{2mm}}
\newcommand{\aln}[1]{\begin{align*}#1\end{align*}}
\newcommand{\al}[1]{\begin{align}#1\end{align}}
\begin{document}

\title{Worst-Case Additive Noise in \\ Wireless Networks}


\author{Ilan Shomorony and A. Salman Avestimehr \thanks{The authors are with the School of Electrical and Computer Engineering, Cornell University, Ithaca, NY 14853 USA (e-mails: is256@cornell.edu, avestimehr@ece.cornell.edu). \\
\indent This work is in part supported by NSF grants CAREER-0953117, CCF-1144000, CCF-1161720, AFOSR Young Investigator Program award FA9550-11-1-0064, and NSF TRUST Center. \\
\indent This paper was presented in part at the International Symposium on Information Theory 2012, Cambridge, USA \cite{wcnoiseisit}}}


\maketitle

%
%
%



\begin{abstract}
A classical result in Information Theory states that the Gaussian noise is the \emph{worst-case additive noise} in point-to-point channels, meaning that, for a fixed noise variance, the Gaussian noise minimizes the capacity of an additive noise channel.
In this paper, we significantly generalize this result and show that the Gaussian noise is also the worst-case additive noise in wireless networks with additive noises that are independent from the transmit signals. 
More specifically, we show that, if we fix the noise variance at each node, then the capacity region with Gaussian noises is a subset of the capacity region with any other set of noise distributions.
We prove this result by showing that a coding scheme that achieves a given set of rates on a network with Gaussian additive noises can be used to construct a coding scheme that achieves the same set of rates on a network that has the same topology and traffic demands, but with non-Gaussian additive noises.
\end{abstract}

\begin{section}{Introduction} \label{intro}


The modeling of background noise in point-to-point wireless channels as an additive Gaussian noise is well supported from both theoretical and practical viewpoints. 
In practice, we have witnessed that current wireless systems that were designed based on the assumption of additive Gaussian noise perform quite well. 
This is intuitively explained by the fact that, from the Central Limit Theorem, the composite effect of many (almost) independent noise sources (e.g., thermal noise, shot noise, etc.) should approach a Gaussian distribution. 
From a theoretical point of view, Gaussian noise has been proven to be the \emph{worst-case noise} for additive noise channels. 
This means that, given a variance constraint, the Gaussian noise \emph{minimizes the capacity} of a point-to-point additive noise channel.
This result follows mainly from the fact that  the Gaussian distribution maximizes the entropy subject to a variance constraint.
More precisely, from the Channel Coding Theorem \cite{CoverThomas}, the capacity of a channel $f(y|x)$ is given by
\al{ \label{capexp}
C = \max_{f(x) : E[X^2]\leq P} I(X;Y).
}
Thus, if we choose $X$ to be distributed as $\N(0,P)$, we have that
\aln{
C \geq h(X)-h(X|Y) = \frac12 \log \left(2\pi e P\right) - h(X|Y).
}
As shown in \cite{CoverThomas}, for an additive noise (AN) channel $Y = X+Z$, where $E[Z] = 0$ and $E\left[ Z^2 \right] = \sigma^2$, we have $h(X|Y) \leq \frac12 \log \left(2 \pi e \frac{P\sigma^2}{P+\sigma^2}\right)$. 
We conclude that
\aln{
C_{\rm AN} \geq \frac12 \log \left(1+\frac{P}{\sigma^2}\right) = C_{\rm AWGN},
}
where $C_{\rm AWGN}$ is the capacity of the AWGN channel, which is achieved by a Gaussian input distribution.
Moreover, a more operational justification of the fact that Gaussian is the worst-case noise for additive noise channels was provided in  \cite{LapidothNearest}, where it was shown that random Gaussian codebooks and nearest-neighbor decoding achieve the capacity of the corresponding AWGN channel on a non-Gaussian AN channel. 

Worst-case noise characterizations in settings other than a simple scalar additive noise channel are few in the literature.
One such example is \cite{DiggaviWorstCase}, where the authors consider vector channels with additive noise subject to the constraint that the noise covariance matrix lies in a convex set.
It is shown that, in this setting, the worst-case noise is vector Gaussian with a covariance matrix that depends on the transmit power constraints.
In \cite{ShamaiWorstCase}, a scalar additive noise channel with binary input is considered.
The probability mass function of the (discrete) worst-case noise is characterized, and the worst-case capacity (i.e., the capacity under the worst-case noise) is found.
Once we go beyond point-to-point channels, Gaussian noise is only known to be the worst-case additive noise in some special wireless networks, such as the Multiple Access Channel, the Degraded Broadcast Channel and MIMO channels. 
In all such cases the capacity has been fully characterized and is known to be achievable with Gaussian inputs. Therefore, similar arguments to the one above can be used to show that, in these cases, Gaussian noise is indeed the worst-case additive noise.
However, for more general wireless networks where the capacity is unknown, we lack the tools to make such an assertion.
The recent constant-gap capacity approximations for the Interference Channel \cite{ETW} and for single-source single-destination relay networks \cite{ADTFullPaper,NoisyNetworkCoding,SuhasLatticeRelay} can only be used to state that Gaussian noise is ``approximately'' the worst-case additive noise in these cases.
Nonetheless, in a leap of faith, most of the research concerning such systems and many other wireless networks views the AWGN channel model as the standard wireless link model.
In general, it remains unknown whether Gaussian noise is the worst-case additive noise in wireless networks.

In this work, we address this issue and show that the Gaussian noise is in fact the worst-case noise for arbitrary wireless networks with additive noises that are independent of the transmit signals.
We consider wireless networks with unrestricted topologies and general traffic demands.
We show that any coding scheme 
that achieves a given set of rates on a network with Gaussian additive noises can be used to construct a coding scheme that achieves the same set of rates on a network that has the same topology and traffic demands, but with non-Gaussian additive noises.
It is also important to notice that our coding scheme construction only depends on the mean and variance of the noise distributions of our non-Gaussian network, and is oblivious to their precise statistics.
This means that our approach also results in a framework to design codes for networks with unknown noise distributions with an asymptotic performance guarantee.

We prove that the Gaussian noise is the worst-case noise in wireless networks based on two main results.
The first one is that, given a coding scheme with \emph{finite reading precision} for an AWGN network, one can build a coding scheme that achieves the same rates on a non-Gaussian wireless network.
A coding scheme is said to have finite reading precision if, for any node, its transmit signals only depend on its received signals read up to a finite number of digits after the decimal point.
This result is proven in three main steps.
We start by applying a transformation at the transmit signals and received signals of all nodes in the network in order to create an ``approximately Gaussian'' effective network.
The technique resembles OFDM in that it uses the Discrete Fourier Transform in order to mix together multiple uses of the same channel.
This mixing causes the additive noise terms from distinct network uses to be averaged over time and, by making use of Lindeberg's Central Limit Theorem \cite{BillingsleyConvergence}, it can be shown that the resulting effective noise is approximately Gaussian in the distribution sense.
Thus, we create an approximately Gaussian network.
However, this mixing causes distinct noise realizations at the same receiver to be dependent of each other. 
The second step is an interleaving technique, which allows us to handle this dependence between distinct noise realizations.
The interleaving operation creates multiple blocks of network uses inside which the additive noises are i.i.d. and almost normally-distributed.
Inside each of these blocks we are able to apply the original coding scheme that we have for the AWGN network.
The third step involves evaluating the performance of our original coding scheme on this i.i.d.~almost normally-distributed blocks.
This can be done because we require the original coding scheme to have finite reading precision.
For such coding schemes, the sets of noise realizations that cause the coding scheme to make an error can be shown to be \emph{continuity sets}.
It follows from the portmanteau Theorem \cite{BillingsleyConvergence} that the coding scheme's performance on an almost-Gaussian network does not deviate much from its performance on an actual Gaussian network.

The second main result we need is that, for any wireless network, the capacity when we restrict ourselves to coding schemes with finite reading precision, and allow the precision to tend to infinity along the sequence of coding schemes, is the same as the unrestricted capacity.
To prove this we show that, for any coding scheme with infinite precision, there exists a quantization scheme of the received signals which does not increase the error probability of the coding scheme too much.
This is done by showing that a truncation of the bit expansion of the received signal followed by a random shift performs well; thus, there must exist a fixed shift for each node which guarantees the same performance.
This quantization operation makes the coding scheme have finite reading precision, and the result follows.

The paper is organized as follows. 
In Section \ref{setupsec}, we describe the network model and introduce the necessary terminology.
We start by focusing on wireless networks with $L$ unicast sessions, which makes the proofs simpler and easier to follow.
In Section \ref{mainsec}, we state our main result (Theorem \ref{mainthm}) and the two main theorems that are needed for it, in the context of $L$-unicast wireless networks.
Theorem \ref{finitelem1} states that coding schemes with finite reading precision can be used to construct coding schemes for non-Gaussian networks.
Theorem \ref{finitelem2} states that, for AWGN networks, coding schemes with infinite reading precision can be ``quantized'' yielding coding schemes with finite reading precision that perform almost as well.
We then state our main result for networks with general traffic demands (Theorem 4).
The proof of Theorem \ref{finitelem1} is presented in Section \ref{proofl}, divided into three subsections as follows.
We first describe the OFDM-like scheme in subsection \ref{ofdmsec}.
Then, in Section \ref{noisesec}, we show that the additive noises obtained from the OFDM-like scheme in fact converge in distribution to Gaussian noises. 
In Section \ref{outercodesec}, we describe the interleaving technique and the outer code that are used to handle the dependence between the noises after the OFDM-like scheme, and we show how the requirement of finite reading precision can be used to show that our coding scheme designed for a Gaussian network can be applied to an almost-Gaussian network without much loss in performance.
The proof of Theorem \ref{finitelem2} is in Section \ref{finitesec}.
In Section \ref{extsec}, we describe how we can modify the arguments in the previous Sections in order to consider, instead of $L$-unicast wireless networks, wireless networks with general traffic demands, proving Theorem \ref{mainthm2}.
We conclude the paper in Section \ref{conclusion}.

\end{section}

\begin{section}{Problem Setup and Definitions} \label{setupsec}

In this work, we model wireless networks as follows. 

\begin{definition}
An additive noise wireless network consists of a directed graph $G = (V,E)$, where $V$ is the vertex (or node) set and $E \subset V \times V$ is the edge set, and a real-valued channel gain $h_{u,v}$ associated with each edge $(u,v) \in E$. 
At time $t=0,1,2,...$, each node $u \in V$ transmits a real-valued signal $X_{u}[t]$. 
The signal received by node $v$ at time $t$ is given by
\al{ \label{channelmodel}
Y_v[t] = \sum_{u \in \I(v)} h_{u,v} X_u[t] + N_v[t],
}
where $\I(v) = \{ u \in V \st (u,v) \in E\}$, and the additive noise $N_v$ is assumed to be i.i.d.~over time and to satisfy $E[N_v]=0$ and $E\left[N_v^2\right]= \sigma_v^2 < \infty$.
We also assume that the noise terms are independent from all transmit signals and from all noise terms at distinct nodes. 
If all the additive noises in the network are normal $\N(0,\sigma_v^2)$, then we say the network is an AWGN network.
\end{definition}

In order to define source-destination relationships in a wireless network, we introduce the following notion.

\begin{definition} \label{trafficdef}
For a wireless network with graph $G= (V,E)$, the \emph{traffic demand} is described by a function $\T : V \times \Ps(V) \to \{0,1\}$,
where $\Ps(V)$ is the power set of $V$.
For $s \in V$ and $D \subseteq V$, $\T(s,D) = 1$ if $s$ has a message that is required by all nodes in $D$ and no node outside of $D$, and $\T(s,D) = 0$ otherwise.
\end{definition}

\begin{example}
An $L$-user multiple access channel is defined by a graph $G = (V,E)$ with node set $V = \{s_1,s_2,...,s_L,d\}$, edge set $E = \left\{ (s_1,d),...,(s_L,d) \right\}$, and traffic demands
\aln{
\T(v,U) = \left\{ 
\begin{array}{ll}
1 & \text{ if $v \in \{s_1,s_2,...,s_L\}$ and $U = \{d\}$} \\
0 & \text{ otherwise.}
\end{array}
\right.
 } 
\end{example}

\begin{example}
An $L$-user broadcast channel with degraded message sets is defined by a graph $G = (V,E)$ with node set $V = \{s,d_1,d_2,...,d_L\}$, edge set $E = \left\{ (s,d_1),...,(s,d_L) \right\}$, and traffic demands
\aln{
\T(v,U) = \left\{ 
\begin{array}{ll}
1 & \text{ if $v = s$ and $U = \{d_1,d_2,...,d_\ell\}$, for $\ell=1,2,...,L$} \\
0 & \text{ otherwise.}
\end{array}
\right.
 } 
\end{example}

Even though the results presented in this paper hold for wireless networks with any traffic demands, including the multiple access channel and the broadcast channel, we start by considering the following special class.

\begin{definition}
An $L$-unicast wireless network has $L$ source nodes $s_1,...,s_L \in V$ and $L$ destination nodes $d_1,...,d_L \in V$ all of which are distinct nodes, and traffic demands given by
\aln{
\T(v,U) = \left\{ 
\begin{array}{ll}
1 & \text{ if $(v,U)  = \left(s_\ell,\{d_\ell\}\right)$, for $\ell=1,2,...,L$} \\
0 & \text{ otherwise.}
\end{array}
\right.
}
\end{definition}
%

Presenting our results for $L$-unicast wireless networks first has the advantage of making some of the proofs simpler and easier to follow.
Later, in Section \ref{extsec}, we describe how the same results can be extended to wireless networks with an arbitrary traffic demand $\T$.

We point out that perfect (noiseless) feedback from a destination to a source is not allowed in our model.
However, in Section \ref{extsec}, we consider a generalization of Definition \ref{codedef} that allows the sources' transmit signals to depend on their previously received signals.
Thus, noisy feedback links may exist between a destination and its corresponding source, and by setting the noise variance at the source to be very small, nearly perfect feedback can be simulated.


\begin{definition} \label{codedef}
A coding scheme $\C$ with block length $n \in \IN$ and rate tuple $\vec R = (R_1,...,R_{L}) \in \R^{L}$ for an $L$-unicast additive noise wireless network consists of:
\begin{enumerate}[1. ]
\item An encoding function $f_i : \{1,...,2^{n R_i}\} \to \R^{n}$ for each source $s_i$, $i=1,...,L$, where each codeword $f_i(w_i)$, $w_i \in \{1,...,2^{n R_i}\}$, satisfies an average power constraint of $P$.
\item Relaying functions $r_v^{(t)} : \R^{t-1} \to \R$, for $t=0,...,n-1$, for each node $v \in V$ that is not a source, satisfying the average power constraint
\aln{
\frac1n\sum_{t=0}^{n-1} \left[ r_v^{(t)}(y_0,...,y_{t-1})\right]^2 \leq  P, \quad \text{for all $(y_0,...,y_{n-1}) \in \R^{n}$.}
}
\item A decoding function $g_i : \R^n \to \{1,...,2^{n R_i}\}$ for each destination $d_i$, $i=1,...,L$.
\end{enumerate}
\end{definition}

\begin{definition}
The error probability of a coding scheme $\C$ (as defined in Definition \ref{codedef}), is given by
\aln{
P_{\rm error}(\C) = \Pr \left[ \bigcup_{i=1}^{L} \{ W_i \ne g_i(Y_{d_i}[0],...,Y_{d_i}[n-1]) \} \right],
}
where the message transmitted by source $s_i$, $W_i$, is assumed to be chosen uniformly at random from $\{1,...,2^{n R_i}\}$, for $i=1,...,L$.
\end{definition}

\begin{definition} \label{achievedef}
A rate tuple $\vec R$ is said to be achievable for an $L$-unicast wireless network if there exists a sequence of coding schemes $\C_n$ with rate tuple $\vec R$ and block length $n$, for which $P_{\rm error}(\C_n) \to 0$, as $n\to \infty$.
The sequence of coding schemes $\C_n$, $n=1,2,...$, is then said to achieve rate tuple $\vec R$.
The capacity region of an $L$-unicast wireless network is the closure of the set of achievable rate tuples.
\end{definition}

We will first focus on coding schemes that have \emph{finite reading precision}. 
Then we will show that coding schemes with infinite reading precision can be converted into coding schemes with finite reading precision without much loss in performance.

\begin{definition} \label{precisiondef}
For some $x \in \R$ and a positive integer $\rho$, let $\left\lfloor x \right\rfloor_{\rho} = 2^{-\rho}\lfloor 2^\rho x \rfloor$.
A coding scheme $\C$ is said to have finite reading precision $\rho \in \IN$  
if its relaying functions satisfy
\aln{
& r_{v}^{(t)} ( y_1,...,y_{t-1}) = r_{v}^{(t)} ( \lfloor y_1\rfloor_\rho,...,\lfloor y_{t-1}\rfloor_\rho),
}
for any $(y_1,...,y_{t-1}) \in \R^{t-1}$, any $v \in V - \{s_1,...,s_L \}$, and any time $t$, and its decoding functions satisfy
\aln{
& g_{i} ( y_1,...,y_{n}) = g_{i} ( \lfloor y_1\rfloor_\rho,...,\lfloor y_{n}\rfloor_\rho),
}
for any $(y_1,...,y_{n}) \in \R^{n}$, and $i \in \{1,...,L\}$.
\end{definition}

%


\begin{definition} \label{precisionachievedef}
Rate tuple $\vec R$ is achievable by coding schemes with finite reading precision if we have a sequence of coding schemes $\C_n$, where coding scheme $\C_n$ has finite reading precision $\rho_n$, which achieves rate tuple $\vec R$ according to Definition \ref{achievedef}.
\end{definition}

\begin{remark}
Notice that we allow the precision $\rho_n$ to vary arbitrarily along the sequence of codes, and it may be the case that $\rho_n \to \infty$ as $n \to \infty$.
\end{remark}



\end{section}

\begin{section}{Main Result} \label{mainsec}

Our main result is to show that any rate tuple that is achievable 
on a network where each $N_v$ is Gaussian for each $v \in V$ is also achievable on a network where each $N_v$ instead has \emph{any} distribution with the same mean and variance. 
In the special case of $L$-unicast wireless networks, our main result is the following theorem.



\begin{theorem}[Worst-Case Noise for $L$-Unicast Networks] \label{mainthm} 
From a sequence of coding schemes that achieve rate tuple $\vec R$ on an AWGN $L$-unicast wireless network, it is possible to construct a single sequence of coding schemes that achieves arbitrarily close to $\vec R$ on the same $L$-unicast wireless network, where, for each relay $v$, the distribution of $N_v$ is replaced with any distribution satisfying $E[N_v]=0$ and $E\left[N_v^2\right]=\sigma_v^2$.
Therefore, if $C_{\rm AWGN}$ is the capacity region of the AWGN $L$-unicast wireless network, and $C_{{\rm non\text{-}AWGN}}$ is the capacity region of the same wireless network where, for each relay $v$, the distribution of $N_v$ is replaced with an arbitrary distribution satisfying $E[N_v]=0$ and $E\left[N_v^2\right]=\sigma_v^2$, then
\aln{
C_{\rm AWGN} \subseteq C_{\rm non\text{-}AWGN}.
}
\end{theorem}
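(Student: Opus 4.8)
The plan is to derive Theorem \ref{mainthm} as a direct consequence of the two main theorems stated in this section, namely Theorem \ref{finitelem1} and Theorem \ref{finitelem2}, by chaining them together. So the proof is essentially a two-line composition argument, and the real content is deferred to the proofs of those two theorems (which appear later in the paper). Let me describe how I would write this composition.

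First I would fix a rate tuple $\vec R$ in the interior of $C_{\rm AWGN}$; it suffices to show every such $\vec R$ lies in $C_{\rm non\text{-}AWGN}$, since the capacity region is closed and we are only asked to achieve arbitrarily close to $\vec R$. By hypothesis we have a sequence of coding schemes $\C_n$ achieving $\vec R$ on the AWGN $L$-unicast network, with $P_{\rm error}(\C_n) \to 0$. These schemes may have infinite reading precision, so the first step is to invoke Theorem \ref{finitelem2}: it converts $\C_n$ into a new sequence $\C_n'$ of coding schemes that have finite reading precision $\rho_n$ (possibly with $\rho_n \to \infty$) and still achieve $\vec R$ on the AWGN network. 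Then I would apply Theorem \ref{finitelem1} to this finite-reading-precision sequence: since $\C_n'$ achieves $\vec R$ on the AWGN network and has finite reading precision, Theorem \ref{finitelem1} yields a single sequence of coding schemes achieving arbitrarily close to $\vec R$ on the network where each $N_v$ is replaced by an arbitrary distribution with mean $0$ and variance $\sigma_v^2$. Hence $\vec R$ (or a rate tuple arbitrarily close to it) is achievable on the non-Gaussian network, and since $\vec R$ was an arbitrary interior point of $C_{\rm AWGN}$ and capacity regions are closed, we conclude $C_{\rm AWGN} \subseteq C_{\rm non\text{-}AWGN}$.

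The only subtlety worth spelling out is the bookkeeping of the ``arbitrarily close'' quantifier: Theorem \ref{finitelem1} does not promise exact achievability of $\vec R$ on the non-Gaussian network but only achievability arbitrarily close to it, so I would phrase the conclusion in terms of achieving $\vec R - \epsilon \vec 1$ for every $\epsilon > 0$, and then use the closedness of the capacity region to absorb this slack. Concretely, for each $\epsilon$ one obtains a sequence achieving $\vec R - \epsilon \vec 1$ on the non-Gaussian network, so $\vec R - \epsilon \vec 1 \in C_{\rm non\text{-}AWGN}$ for all $\epsilon>0$, and letting $\epsilon \to 0$ and using that $C_{\rm non\text{-}AWGN}$ is closed gives $\vec R \in C_{\rm non\text{-}AWGN}$.

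There is no real obstacle in this particular statement, since it is purely a composition of results proved elsewhere in the paper; the substantive work is entirely inside Theorem \ref{finitelem1} (the OFDM-like mixing, the Lindeberg CLT argument, the interleaving to decouple noise realizations, and the portmanteau/continuity-set argument exploiting finite reading precision) and Theorem \ref{finitelem2} (the randomized truncation-and-shift quantization of received signals). If I were to anticipate where a careful reader might want more detail, it is in verifying that the diagonal-type selection of a single sequence from the doubly-indexed family $\{\C_n'\}$ (indexed by block length, with the replacement-noise performance guarantee) is legitimate — but Theorem \ref{finitelem1} is stated precisely so as to already deliver a single sequence, so no diagonalization is needed at this level.
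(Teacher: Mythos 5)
Your proposal is correct and matches the paper's own (one-line) proof exactly: the paper states ``It is clear that by combining Theorems \ref{finitelem1} and \ref{finitelem2}, Theorem \ref{mainthm} will follow,'' and you are simply spelling out that composition together with the standard $\epsilon$-slack and closure bookkeeping. No gaps, nothing different in substance.
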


We will prove Theorem \ref{mainthm} using the following two auxiliary results.

\begin{theorem} \label{finitelem1}
Suppose a rate tuple $\vec R$ is achievable by coding schemes with finite reading precision on an AWGN $L$-unicast wireless network.
Then it is possible to construct a single sequence of coding schemes that achieves arbitrarily close to $\vec R$ on the same $L$-unicast wireless network where, for each relay $v$, the distribution of $N_v$ is replaced with an arbitrary distribution satisfying $E[N_v]=0$ and $E\left[N_v^2\right]=\sigma_v^2$.
\end{theorem}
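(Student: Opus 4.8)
My plan is to \emph{precondition} the non-Gaussian network with an OFDM-like unitary transform so that, as seen by a finite-precision AWGN code, the network becomes statistically almost indistinguishable from a Gaussian one, and then to transfer the error guarantee via the portmanteau theorem. Concretely, I would fix a sequence $\C_n$ of codes with finite reading precision $\rho_n$ and block length $n$ achieving $\vec R$ on the AWGN network, a \emph{mixing length} $m \in \IN$, and a fixed real orthonormal transform $F$ on $\R^m$ (a real version of the DFT) all of whose entries are $O(1/\sqrt m)$. Group the channel uses into consecutive \emph{super-blocks} of length $m$; every node applies $F^{-1}$ to the length-$m$ block of symbols it wishes to ``effectively'' transmit before placing them on the channel, and applies $F$ to each received length-$m$ block. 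Since the channel \eqref{channelmodel} is memoryless with scalar per-link gains, linearity yields, coordinatewise within each super-block, the \emph{effective} channel $\tilde Y_v = \sum_{u \in \I(v)} h_{u,v}\tilde X_u + \tilde N_v$ with $\tilde N_v = F N_v$; orthonormality keeps every coordinate of $\tilde N_v$ zero-mean with variance $\sigma_v^2$, Parseval preserves the average power constraint $P$ exactly, and causality survives at the granularity of super-blocks (with a delay of $m$).

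On each fixed coordinate $k$, the effective noise is a normalized sum $\sum_\ell F_{k,\ell}N_v[\cdot]$ of i.i.d.\ terms with coefficients $O(1/\sqrt m)$, so Lindeberg's CLT gives $\tilde N_v[\cdot m + k] \cond \N(0,\sigma_v^2)$ as $m \to \infty$, and effective noises on the same coordinate drawn from \emph{different} super-blocks are independent (they are built from disjoint physical noises). The second step is an \emph{interleaving}: I would run many copies of $\C_n$, each copy living on one coordinate $k$ and using one super-block per time step, so that each copy sees a noise sequence that is i.i.d.\ over its $n$ time steps and almost $\N(0,\sigma_v^2)$, while copies assigned to disjoint collections of super-blocks are mutually independent. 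On top of these inner copies I would place an \emph{outer code} (e.g.\ an MDS/erasure code) across copies sharing a coordinate: each inner copy fails with a probability that is small but, for fixed $n$, does not vanish, so the outer code corrects the small fraction of failing copies at a rate cost that tends to $0$ as $P_{\rm error}(\C_n) \to 0$, while its error decays exponentially in the number of copies. This decoupling, together with the exponentially small outer error, is what later permits all parameters to be sent to infinity simultaneously, and the vanishing rate overhead of the outer code is the origin of the ``arbitrarily close to $\vec R$''.

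The third and most delicate step is to show that an inner copy performs on the almost-Gaussian effective network essentially as well as $\C_n$ does on the true AWGN network. The error event of a copy is $\{\tilde{\mathbf N} \in A\}$, where $\tilde{\mathbf N}$ collects all effective noises it touches and $A$ is the set of realizations causing some destination to decode incorrectly. Here the finite-precision hypothesis is essential: every relaying and decoding function of $\C_n$ depends on its inputs only through $\lfloor \cdot \rfloor_{\rho_n}$, so unwinding the recursion shows that every effective received signal is a piecewise-affine function of $\tilde{\mathbf N}$ and that $\one_A$ is constant on each cell of a partition whose cell boundaries lie in a countable union of hyperplanes. Hence $\partial A$ is Lebesgue-null, so it is a null set for the absolutely continuous limiting Gaussian law; that is, $A$ is a \emph{continuity set} of that law. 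The componentwise Lindeberg CLT together with the independence of the components of $\tilde{\mathbf N}$ then yields joint convergence in distribution of $\tilde{\mathbf N}$ to the i.i.d.\ Gaussian noise vector of the AWGN network, and the portmanteau theorem gives $\Pr[\tilde{\mathbf N} \in A] \to P_{\rm error}(\C_n)$ as $m \to \infty$. I would finish by diagonalizing: let $n \to \infty$, then choose $m = m(n)$ large enough that each inner copy errs with probability at most $P_{\rm error}(\C_n) + 1/n$, then take the number of outer copies large enough to drive the overall error to $0$; the resulting single sequence of codes has block length $\to \infty$, rate tuple tending to $\vec R$, and vanishing error on the non-Gaussian network.

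The main obstacle I expect is exactly the chain ``finite reading precision $\Rightarrow$ continuity set $\Rightarrow$ portmanteau applies'': error probability is not a bounded continuous functional of the noise, so convergence in distribution by itself says nothing about error probabilities; it is the quantization hard-wired into the code that confines the decision boundaries to a Lebesgue-null union of hyperplanes that the limiting Gaussian law ignores, and one must also be careful to obtain \emph{joint} (not merely marginal) convergence in distribution of $\tilde{\mathbf N}$. By comparison, checking the Lindeberg condition for the triangular array induced by $F$, the Parseval-based power accounting, the super-block causality bookkeeping, and the outer-code concentration bound should all be routine.
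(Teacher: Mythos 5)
Your proposal is correct and follows the same high-level route as the paper's proof: an orthonormal OFDM-like mixing of super-blocks, Lindeberg's CLT for coordinatewise Gaussian convergence, interleaving so that each inner copy of $\C_n$ sees i.i.d.\ almost-Gaussian noise, the finite-precision hypothesis to show that error events are continuity sets, the portmanteau theorem to transfer the error guarantee, and an outer code to absorb the residual (non-vanishing for fixed $n$) inner error. There are a few genuine but harmless variations worth flagging. First, you use a fixed real orthonormal transform $F$ with uniformly $O(1/\sqrt m)$ entries, which makes the Parseval bookkeeping and the variance calculation cleaner; the paper instead uses the complex unitary DFT with conjugate-symmetric packing and must treat the real and imaginary parts as separate effective channels, normalize their variances by $1/2$, and compensate by a $\sqrt 2$ power rescaling. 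Second, for the outer code you propose an explicit MDS/error-correcting code applied \emph{per coordinate} to a collection of statistically independent inner copies built from disjoint super-block groups, together with a Chernoff concentration on the number of inner failures; the paper instead packages the $b$ interleaved length-$k$ blocks as a single discrete memoryless channel with input alphabet $\{1,\dots,2^{kR_j}\}^b$, uses a random outer code, and bounds the achievable rate by a mutual-information/Fano argument. Your version needs explicit bookkeeping to enforce independence (disjoint super-block groups) and to treat failures as errors rather than erasures (hence the $1-2\delta$ rather than $1-\delta$ rate), whereas the paper's version tolerates the dependence across coordinates implicitly at the cost of invoking the channel coding theorem. Third, your argument that $\partial A$ is Lebesgue-null uses a union of hyperplanes; the paper proves the cells $Q(\vec y)$ are convex and shows convex boundaries are null, then carefully verifies $\mu(\partial A)=0$ via the inclusion $\partial(\bigcup_i S_i) \subseteq \bigcup_i \partial S_i$ that one must justify rather than assume.

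One point you gloss over that the paper handles explicitly: when you write ``choose $m=m(n)$ large enough that each inner copy errs with probability at most $P_{\rm error}(\C_n)+1/n$'', this requires the convergence $\ep_{n,m}^{(k)} \to \ep_n$ to hold \emph{uniformly} over the $m$ coordinates $k \in \{1,\dots,m\}$, not merely for each fixed $k$. Since the coordinate set itself grows with $m$, this is not automatic. The paper establishes it by defining $\ell_b = \arg\max_\ell \ep_{k,b}^{(\ell)}$, partitioning the resulting sequence of worst coordinates into subsequences corresponding to the four effective-channel types (I), (II'), (III'), (IV), and verifying convergence in distribution along each subsequence, so that $\max_\ell \ep_{k,b}^{(\ell)} \to \ep_k$. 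You would need an analogous argument (or else discard a vanishing fraction of bad coordinates, or let the outer-code rate vary per coordinate and argue about the average). This is a small gap, consistent with your own assessment that the CLT bookkeeping ``should be routine''; the rest of the proposal matches the paper's proof.
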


\begin{theorem} \label{finitelem2}
Suppose we have a sequence of coding schemes $\C_n$ achieving a rate tuple $\vec R$ on an AWGN network.
Then it is possible to construct a sequence of coding schemes $\C_n^\star$ with finite reading precision that also achieves $\vec R$ on the same AWGN network.
\end{theorem}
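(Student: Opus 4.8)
The plan is to \emph{dither} the truncation rather than truncate naively. Simply rounding a received value $y$ to $\lfloor y\rfloor_\rho$ could feed the (arbitrary, possibly wildly discontinuous) relaying and decoding functions of $\C_n$ a signal on the wrong side of a discontinuity, so instead each receiving node $v$, at each time $t$, forms the quantized-and-dithered value $\tilde Y_v[t] \defi \lfloor Y_v[t]\rfloor_\rho + a_v[t]$, where the shift $a_v[t]\in[0,2^{-\rho})$ is fixed as part of the code design, and then applies the \emph{original} relaying/decoding functions of $\C_n$ to the $\tilde Y$'s. Since the shift is added \emph{after} the floor, every relaying and decoding function of the resulting scheme $\C_n^\star$ factors through $y\mapsto\lfloor y\rfloor_\rho$, so $\C_n^\star$ has finite reading precision $\rho$; the encoders are untouched, and the average-power constraints are inherited because the relaying functions of $\C_n$ satisfy them at \emph{every} point of their domain (Definition~\ref{codedef}) and $\C_n^\star$ only ever evaluates them at such points. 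Thus everything reduces to showing that, for a suitable precision $\rho=\rho_n\to\infty$, running $\C_n$ with the observations $\tilde Y_v[t]$ in place of $Y_v[t]$ changes the error probability by only $o(1)$.

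To get this I would first let the shifts $a_v[t]$ be \emph{independent} and uniform on $[0,2^{-\rho})$, making $\C_n^\star$ a randomized scheme whose error probability is the average, over the table of shifts, of the error probabilities of the deterministic schemes obtained by fixing the table; so it suffices to bound this average and then invoke the probabilistic method for a good fixed table. The error event is a function of the messages and of the sequences actually fed to the decoders, so it is enough to bound the total-variation distance between the joint law of (messages, decoder inputs) when $\C_n$ runs on the true AWGN channel and when it runs on the dithered-quantized channel. I would bound this by a hybrid over the $n$ time slots: let $\nu_k$ be the transcript law in which slots $0,\dots,k-1$ use $\tilde Y$ and slots $k,\dots,n-1$ use the true $Y$, so $\nu_0$ is the AWGN execution and $\nu_n$ the dithered one. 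Conditioning on the common history through slot $k-1$ (which pins down all transmit signals at slot $k$, hence the conditional laws of the slot-$k$ outputs) and on the fresh noises of slots $>k$, the remainders of $\nu_k$ and $\nu_{k+1}$ are the \emph{same} deterministic map applied to $\{Y_v[k]\}_v$ and to $\{\tilde Y_v[k]\}_v$ respectively; since the slot-$k$ outputs are conditionally independent across nodes (the noises are), the data-processing inequality and subadditivity of total variation over products give $d_{\mathrm{TV}}(\nu_k,\nu_{k+1}) \le \sum_v \mathbb{E}\, d_{\mathrm{TV}}\!\left(\mathcal L\!\left(Y_v[k]\mid\mathcal H\right),\, \mathcal L\!\left(\tilde Y_v[k]\mid\mathcal H\right)\right)$.

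The crux---and the step I expect to be the real work---is the per-node, per-slot estimate $d_{\mathrm{TV}}\!\left(Y_v[k],\, \lfloor Y_v[k]\rfloor_\rho + a_v[k]\right) \le c\, 2^{-\rho}/\sigma_v$, uniform over the conditioning; this is exactly where Gaussianity and the dither are both needed. Conditioned on the past, $Y_v[k]$ is $\N(m,\sigma_v^2)$ for some mean $m$, and the dithered quantizer replaces its density $\phi$ by the density that is constant on each bin $[j2^{-\rho},(j+1)2^{-\rho})$ and equals the $\phi$-average over that bin, so their $L^1$ distance is at most $2^{-\rho}$ times the total variation $\int|\phi'| = 2/(\sqrt{2\pi}\,\sigma_v)$ of $\phi$, with a bound uniform in $m$. (This uses $\sigma_v>0$ at every receiving node, which is implicit in the AWGN model; a noiseless node has nothing to quantize. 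The dither is essential, since $\lfloor Y_v[k]\rfloor_\rho$ alone is discrete and hence at total-variation distance $1$ from $Y_v[k]$; and so is Gaussianity, since a bin-averaged density need not be close to a rough one.) Telescoping gives $d_{\mathrm{TV}}(\nu_0,\nu_n) \le n\left(\sum_v c/\sigma_v\right) 2^{-\rho}$, hence $P_{\rm error}(\C_n^\star) \le P_{\rm error}(\C_n) + n\left(\sum_v c/\sigma_v\right) 2^{-\rho}$ for the randomized scheme, and therefore for some deterministic shift table by the probabilistic method. Choosing $\rho_n\to\infty$ with $n2^{-\rho_n}\to 0$ (e.g.\ $\rho_n = 2\lceil\log_2 n\rceil$) kills the correction term, so the deterministic $\C_n^\star$---of block length $n$, rate tuple $\vec R$, and finite reading precision $\rho_n$---has $P_{\rm error}(\C_n^\star)\to0$ and hence achieves $\vec R$ on the AWGN network; nothing in the argument used the topology or the traffic demand. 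The one delicate point is propagating the elementary per-slot estimate through a feedback-laden network whose processing functions carry no regularity of their own, which is handled by loading all the smoothness onto the noise-plus-dither and transporting it with the time-hybrid.
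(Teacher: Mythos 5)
Your proof is correct and arrives at the same construction as the paper---apply the randomized dithered quantizer $\tilde Y_v[t]=\lfloor Y_v[t]\rfloor_\rho+U_v[t]$ with independent uniform dithers, feed the original relaying/decoding maps, then derandomize by the probabilistic method to extract a fixed shift table---but the argument that the error probability is essentially unchanged proceeds by a genuinely different route. The paper fixes the message vector, factors the conditional density of the full transcript as a product of per-slot per-node conditional Gaussian (respectively dithered) densities, proves a standalone lemma (its Lemma~3) that the dithered density converges pointwise almost everywhere to the original density as $\rho\to\infty$, and then invokes Scheff\'e's theorem to upgrade this to total-variation convergence; this gives a qualitative $\ep_{n,m}\to\ep_n$ limit for each fixed $n$ and no explicit rate. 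You instead set up a hybrid $\nu_0,\dots,\nu_n$ over time slots, use the data-processing inequality and conditional independence of the per-node noises to localize the change to a single slot, and bound the per-slot, per-node total variation between $\N(m,\sigma_v^2)$ and its bin-averaged (dithered) version by $\tfrac12\cdot 2^{-\rho}\|\phi'\|_1=O(2^{-\rho}/\sigma_v)$, uniformly in the conditional mean $m$; the telescope then yields the explicit bound $d_{\mathrm{TV}}(\nu_0,\nu_n)\le c\,n\,2^{-\rho}\sum_v \sigma_v^{-1}$, from which $\rho_n=2\lceil\log_2 n\rceil$ already suffices. Your version buys a clean quantitative rate and sidesteps the slightly delicate point in the paper's factorization (there the conditioning variable in the product is the dithered past, whose law itself depends on $\rho$, so one must check the relevant conditional densities are comparable between the two schemes); the paper's version is shorter once the density-convergence lemma is in hand and avoids slot-by-slot bookkeeping. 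Both give the theorem; the difference is the machinery used to control the effect of the quantization.
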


It is clear that by combining Theorems \ref{finitelem1} and \ref{finitelem2}, Theorem \ref{mainthm} will follow.
The proof of Theorems \ref{finitelem1} and \ref{finitelem2} will be presented in Section \ref{proofl}.
The result in Theorem \ref{mainthm} can be generalized to networks with arbitrary traffic demands. 
By generalizing Definition \ref{codedef} for the case of general traffic demands (which we do in Section \ref{extsec}), we can state our main result as follows.

\begin{theorem}[Worst-Case Noise for Networks with General Traffic Demands] \label{mainthm2}
Suppose a rate tuple $\vec R$ is achievable on an AWGN wireless network with some arbitrary traffic demands $\T$.
Then it is possible to construct a sequence of coding schemes that achieves arbitrarily close to $\vec R$ on the same additive noise wireless network where, for each relay $v$, the distribution of $N_v$ is replaced with an arbitrary distribution satisfying $E[N_v]=0$ and $E\left[N_v^2\right]=\sigma_v^2$.
Therefore, if $C_{\rm AWGN}$ is the capacity region of the AWGN wireless network, and $C_{{\rm non\text{-}AWGN}}$ is the capacity region of the same wireless network where, for each relay $v$, the distribution of $N_v$ is replaced with an arbitrary distribution satisfying $E[N_v]=0$ and $E\left[N_v^2\right]=\sigma_v^2$, then
\aln{
C_{\rm AWGN} \subseteq C_{\rm non\text{-}AWGN}.
}
\end{theorem}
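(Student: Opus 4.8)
The plan is to reproduce, essentially step for step, the argument given for $L$-unicast networks, exploiting the observation that nothing in the proofs of Theorems~\ref{finitelem1} and~\ref{finitelem2} uses the unicast structure of the traffic: the OFDM-like transform, the Lindeberg central-limit argument, the interleaving, the portmanteau/continuity-set argument, and the dithered-quantization argument all act at the level of the network \emph{topology}, the channel model~\eqref{channelmodel}, the per-node power constraints, and the decay of an error probability that is a \emph{finite} union of ``bad'' events. So the first task is to restate the relevant definitions for an arbitrary traffic demand $\T$, and then verify that each ingredient carries over unchanged.

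First I would generalize Definition~\ref{codedef}. A coding scheme of block length $n$ now consists of: for each pair $(s,D)$ with $\T(s,D)=1$, a message $W_{s,D}$ uniform over $\{1,\dots,2^{nR_{s,D}}\}$; at each node $v\in V$ a (relaying) rule $r_v^{(t)}\st\R^{t-1}\to\R$ producing $X_v[t]$ from its past received symbols, together with, at a source node, a dependence on its own messages --- note that, as in the feedback-enabled model of Section~\ref{extsec}, a source is now allowed to relay --- all subject to the per-symbol average power constraint $P$; and, for each $(s,D)$ with $\T(s,D)=1$ and each $d\in D$, a decoding function $g_{d,(s,D)}\st\R^n\to\{1,\dots,2^{nR_{s,D}}\}$. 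The error probability is
\[
P_{\rm error}(\C)=\Pr\!\left[\ \bigcup_{(s,D)\st\T(s,D)=1}\ \bigcup_{d\in D}\ \bigl\{W_{s,D}\ne g_{d,(s,D)}\bigl(Y_d[0],\dots,Y_d[n-1]\bigr)\bigr\}\right].
\]
Achievability, the capacity region, finite reading precision $\rho$, and achievability by finite-precision schemes are defined exactly as in Definitions~\ref{achievedef}, \ref{precisiondef} and~\ref{precisionachievedef}, with the precision restriction imposed on the relaying rules of all nodes and on all decoding functions. The key point is that the index set $\{(s,D,d)\st\T(s,D)=1,\ d\in D\}$ is finite (of size at most $|V|\cdot 2^{|V|}$) and does not depend on $n$.

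With these definitions in hand, I would prove the general-traffic analogues of Theorems~\ref{finitelem2} and~\ref{finitelem1} and combine them. For the analogue of Theorem~\ref{finitelem2}: apply the dithered-truncation quantizer of Section~\ref{finitesec} to the received signal at \emph{every} node; since this only reshapes the inputs to the relaying and decoding maps, the bound on the increase of each single term $\Pr[W_{s,D}\ne g_{d,(s,D)}(\cdot)]$ is obtained as before, and a union bound over the finitely many such terms shows the total error probability increases by an arbitrarily small amount, whence a fixed per-node shift achieving this exists. For the analogue of Theorem~\ref{finitelem1}: apply the OFDM-like transform of Section~\ref{ofdmsec} to the transmit and received sequences of every node; convergence in distribution of the effective noise to a Gaussian (Section~\ref{noisesec}) and the interleaving that yields blocks of i.i.d., almost-normal noise (Section~\ref{outercodesec}) are statements about the noise process only and are unaffected by the traffic; inside each block one runs the given finite-precision AWGN scheme, whose error event --- for each $(s,D)$ and each $d\in D$ --- is a continuity set by the finite-reading-precision property exactly as in the unicast case, hence so is their finite union, and the portmanteau theorem puts the error probability on the almost-Gaussian network within $o(1)$ of that on the true Gaussian network. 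Combining the two gives Theorem~\ref{mainthm2}.

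The step is routine rather than deep; the real work is the discipline of checking that \emph{every} step of the $L$-unicast proof is genuinely traffic-agnostic. The two places that most need care are: (i) confirming that the decoding-error event stays a continuity set once one message is decoded at several destinations and many messages coexist --- this reduces to the stability of continuity sets under finite unions (and intersections); and (ii) ensuring that all the union bounds invoked in both theorems run over the same fixed, $n$-independent index set $\{(s,D,d)\}$, so that ``each term is $o(1)$'' does imply ``the sum is $o(1)$.'' A minor additional point is that, since the general model lets a source relay, both the finite-reading-precision restriction and the OFDM-like transform must be applied uniformly to all nodes, sources included; this is consistent with the generalized Definition~\ref{codedef} above.
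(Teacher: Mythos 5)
Your framework matches the paper's: generalize the coding-scheme definition, then rerun the finite-precision/quantization argument and the OFDM/interleaving/continuity-set argument essentially unchanged. The continuity-set step and the quantization step are indeed handled as you say, and your observation that the index set $\{(s,D,d)\}$ is finite and $n$-independent is the right thing to check for those union bounds.

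However, there is one step in the $L$-unicast proof that is \emph{not} traffic-agnostic, and your proposal passes over it: the outer-code stage of Section~\ref{outercodesec}. Recall that after interleaving one has $b$ parallel length-$k$ blocks with $\ep_{k,b}\to\ep_k$, but the noises are dependent \emph{across} blocks; a naive union bound gives $b\,\ep_{k,b}$, which need not vanish, so one cannot just ``combine the two.'' The paper kills this dependence with a random outer code over the induced discrete memoryless super-channel. In the unicast case that super-channel is point-to-point (one per $(s_j,d_j)$). With general $\T$, the message $w(v,D)$ must be recovered by \emph{every} $d\in D$, so the outer code must be a common-message multicast code over the broadcast super-channel with input $W(v,D)^b$ and outputs $\hat W(v,D)_u^b$ for $u\in D$; the achievable outer rate is then bounded by
\[
\frac{1}{bk}\min_{u\in D} I\!\left(W(v,D)^b;\hat W(v,D)_u^b\right),
\]
and the Fano bound acquires a $\max_{u\in D}$ over the conditional-entropy terms. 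This is exactly what the paper does, and it is the only place where the unicast argument genuinely changes form rather than merely being re-indexed. Your proposal should state this modification explicitly rather than treating the whole of Section~\ref{outercodesec} as ``unaffected by the traffic.'' With that addition, the proof is complete and matches the paper's.
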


In Section \ref{extsec}, we describe how the proofs of Theorems \ref{finitelem1} and \ref{finitelem2} can be extended to the case of general traffic demands, in order to establish Theorem \ref{mainthm2}.

\end{section}

\begin{section}{Proof of Main Result for $L$-Unicast Wireless Networks} \label{proofl}

In this Section, we will prove Theorems \ref{finitelem1} and \ref{finitelem2}, from which Theorem \ref{mainthm} will follow.
To prove Theorem \ref{finitelem1}, we start by assuming that we have a sequence of coding schemes with finite reading precision designed to achieve a rate tuple $\vec R$ on an AWGN network.
Then, through a series of steps, we will use this sequence of coding schemes to construct another sequence of coding schemes that achieves arbitrarily close to the rate tuple $\vec R$ on the corresponding network where the additive noises are not Gaussian.

A diagram illustrating the proof steps of Theorem \ref{finitelem1} is shown in Fig.~\ref{diagram}.
We start by describing an OFDM-like scheme that is applied to all nodes in the network.
The main idea is that, by applying an Inverse Discrete Fourier Transform (IDFT) to the block of transmit signals of each node, and a Discrete Fourier Transform (DFT) to the block of received signals of each node, we create effective additive noise terms that are weighted averages of the additive noise realizations during that block. 
We describe this procedure in detail in Section \ref{ofdmsec}.
Then, in Section \ref{noisesec}, we show that this mixture of noises converges in distribution to a Gaussian additive noise term.
This is done by showing that the weighted average of the noise realizations satisfies Lindeberg's Central Limit Theorem Condition \cite{BillingsleyConvergence}.
Therefore, the OFDM-like scheme effectively produces a network where the noises at each node are dependent across time and approximately Gaussian. 
The dependence across time is undesirable since our original coding scheme designed for the AWGN network assumed that the additive noise at each receiver is i.i.d.~over time.
To overcome this problem, in Section \ref{outercodesec}, we apply the OFDM-like scheme over multiple blocks, and then we interleave the effective network uses from distinct blocks.
This effectively creates several blocks in which the network behaves as an Approximately AWGN network (with i.i.d.~noises).
Then our original code for the AWGN network can be applied to each approximately AWGN block.
The fact that this code has finite reading precision guarantees that, when applied to the approximately AWGN block, its error probability is close to its error probability on the AWGN network.
More formally, the error probability of a coding scheme with block length $k$, for a given choice of messages $\vec w \in \prod_{i=1}^{L} \{1,...,2^{k R_i}\}$, can be seen as the probability measure of the error set $A_{\vec w}$ (i.e., the set of noise realizations which causes an error to occur).
As illustrated in Fig. \ref{errorsetfig}, in general, this set could be arbitrarily ill shaped.
However, if the coding scheme has finite reading precision, $A_{\vec w}$ can be shown to be a continuity set, which implies that its measure under similar probability measures cannot change much.
\begin{figure}[ht] 
     \centering
     \subfigure[]{
       \includegraphics[height=28mm]{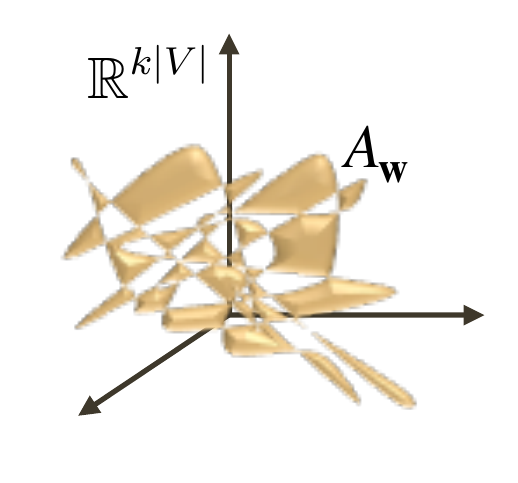}} 
    \hspace{7mm}
    \subfigure[]{
       \includegraphics[height=28mm]{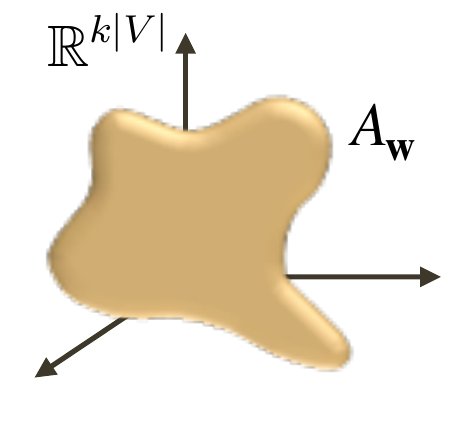}}
	\caption{(a) Illustration of arbitrarily shaped error set $A_{\vec w}$ and (b) continuity set $A_{\vec w}$.\label{errorsetfig}}
\end{figure}
Finally, we take care of the dependence between the noises of different blocks created in the interleaving operation by using a random outer code for each source-destination pair. 
This can be done if we view the coding scheme as creating a discrete channel between the message chosen at a given source and the decoded message at its corresponding destination.
Then we can show via a mutual-information argument that we can use an outer code to achieve a rate tuple arbitrarily close to $\vec R$ on the non-Gaussian wireless network.

In Section \ref{finitesec}, we prove Theorem \ref{finitelem2}.
The main idea is to show that, given a coding scheme with infinite reading precision, there exists a set of quantization mappings, one for each node in the network, such that, if each node quantizes its received signal before applying the relaying or decoding function, the change in the error probability is arbitrarily small.


\begin{figure}[ht] 
     \centering
       \includegraphics[height=48mm]{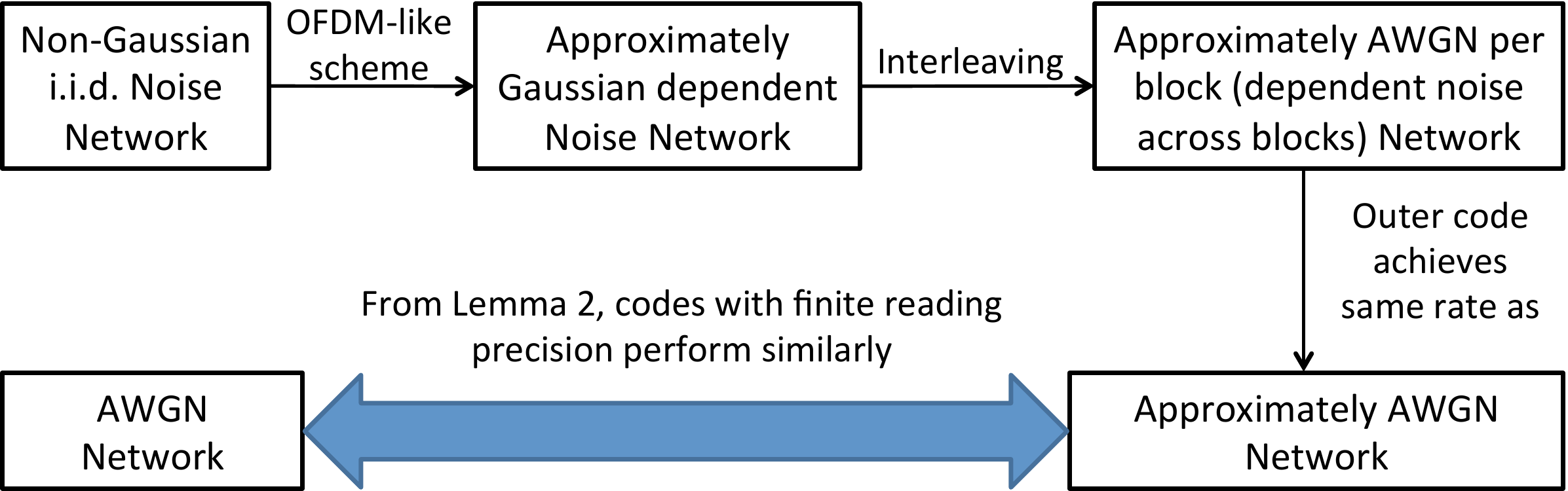} 
	\caption{Diagram of proof steps of Theorem \ref{finitelem1}.
Thin arrows relate to steps in the construction of our new coding scheme, while the thick arrow indicates a conceptual connection established through Lemma \ref{mucontlem}
\label{diagram}}
\end{figure}

We point out that our results are not inconsistent with the intuition that, for a channel with a discrete output alphabet, the worst-case noise should be discrete.
Theorems \ref{finitelem1} and \ref{finitelem2} do {\it not} imply that Gaussian noise is the worst-case noise if we restrict ourselves to coding schemes with finite precision, because, in Theorem \ref{finitelem1}, we may require coding schemes with {\it infinite precision} to achieve the same point in the capacity region in the non-AWGN network (in fact we use coding schemes with infinite precision in our construction based on applying the OFDM-like scheme to the received signals first).

\begin{subsection}{An OFDM-like scheme to mix the noises over time}
\label{ofdmsec}

We use an approach similar to OFDM in order to create an effective network with additive noises that are as close to normally-distributed as we wish.
Essentially, each node in the network will apply transformations to its transmit signals and to its received signals, thus creating an effective network with new input-output relationships.
If we focus on $b$ uses of a single link of the network, then we convert the actual channel (i.e., a mapping from channel inputs $X[0],X[1],...,X[b-1]$ to channel outputs $Y[0],Y[1],...,Y[b-1]$) into an effective channel that maps inputs $d_0,d_1,...,d_{b-1}$ into effective channel outputs $\tilde Y_0, \Re\left[ \tilde Y_1\right], \Im\left[ \tilde Y_1\right], ..., \Re\left[ \tilde Y_{b/2-1}\right], \Im\left[ \tilde Y_{b/2-1}\right], \tilde Y_{b/2}$, where $\Re[z]$ and $\Im[z]$ refer respectively to the real and imaginary parts of a complex number $z$.
The overall transformation, depicted in Fig.~\ref{effective}, can be described as follows.
\begin{figure}[ht] 
     \centering
       \includegraphics[height=45mm]{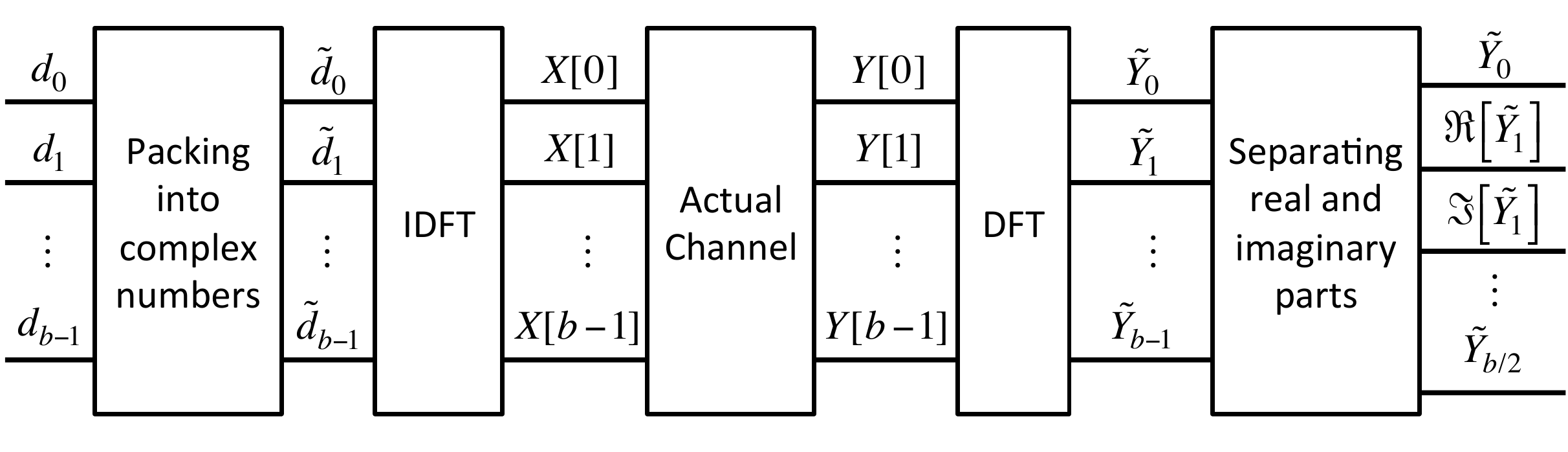} 
	\caption{Diagram of the steps that create the effective channel.\label{effective}}
\end{figure}
Assume that a node $u \in V$ has $b$ real numbers $d_0, d_1,..., d_{b-1}$ which are the inputs to the effective channels we intend to create. 
We assume that $b$ is even, to simplify the expressions.
Then node $u$ ``packs'' these signals into $b$ complex numbers $\tilde d_0,..., \tilde d_{b-1}$ as follows.
\al{
\begin{array}{ll}
\tilde d_0 = d_0 \\
  \noalign{\medskip}
\tilde d_i = d_{2i-1} + j d_{2i} & \text{ for $i=1,...,\frac b2 - 1$} \\
  \noalign{\medskip}
\tilde d_{b/2} = d_{b-1} \\
  \noalign{\medskip}
\tilde d_i = \tilde d_{b-i}^* & \text{ for $i=\frac b2+1,...,b - 1$} \\
\end{array}
\nonumber
} 
Next, node $u$ takes the IDFT of the vector $\vec{\tilde d_u} = (\tilde d_0 , ..., \tilde d_{b-1})$ to obtain the vector ${\bf X}_u = \IDFT({\bf \tilde d_u})$. 
Throughout the paper, we assume that DFT and IDFT refer to the \emph{unitary} version of the DFT and IDFT.
Since ${\bf \tilde d_u}$ is conjugate symmetric, $\bf X_u$ is a real vector (in $\R^b$). 
Moreover, we will require the original real-valued signals to satisfy
\al{
& \avg \left[d_0^2\right] \leq P, \\
& \avg \left[d_i^2\right] \leq P/2, \text{ for $i=1,...,b-2$},\label{powerdiv2} \\
& \avg \left[d_{b-1}^2\right] \leq P, 
}
where the $\avg$ operator  refers to time average; i.e., if each $d_i$ is seen as a stream of signals $d_i[0],...,d_i[k-1]$, then $\avg(d_i) = k^{-1}\sum_{t=0}^{k-1} d_i[t]$.
Then we must have, by Parseval's relationship,
\aln{
\frac1b \, \avg \left[\left\|\vec {X}_u\right\|^2\right] & = 
\frac1b \sum_{i=0}^{b-1} \avg \left[\big|\tilde d_i \big|^2\right] \\
& = \frac1b \left\{ \avg \left[d_0^2\right] + \avg \left[d_{b-1}^2\right] + 2 \sum_{i=1}^{b/2-1} \avg \left[d_{2i-1}^2 + d_{2i}^2\right] \right\} 
\leq P. 
}
Therefore, $u$ may transmit $k$ vectors ${\bf X}_u$, each one over $b$ time-slots, and the average power constraint of $P$ over the block $n=kb$ will be satisfied.
The parameter $k$ can be understood as the number of blocks of length $b$ to which we apply the OFDM-like scheme.
A node $v$ 
will receive, over each sequence of $b$ time-slots,
\[ {\bf Y}_v =  {\textstyle \sum_{u \in \I(v)} h_{u,v} {\bf X}_u +{\bf N}_v}.\]
By applying a DFT to each block of $b$ received signals, node $v$ will obtain 
\aln{
{\bf \tilde Y}_v = \DFT({\bf Y}_v) =  {\textstyle \sum_{u \in \I(v)} h_{u,v} {\bf \tilde d}_u} + \DFT({\bf N}_v).
}
The transformation induced by the the use of the IDFT on blocks of transmit signals and the DFT on blocks of received signals is illustrated in Fig.~\ref{networks}.


%
%
\begin{figure}[ht] 
     \centering
       \includegraphics[height=30mm]{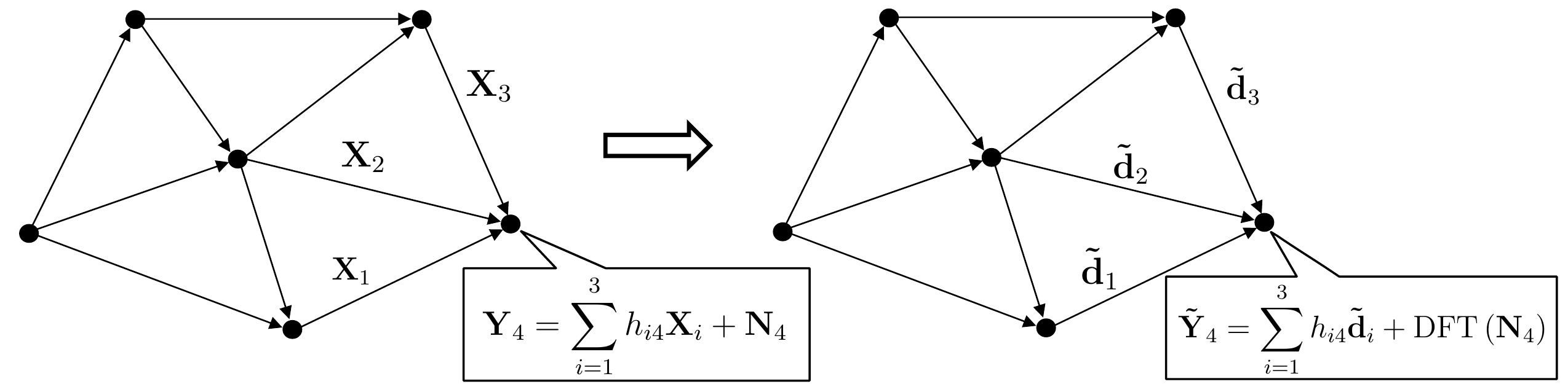} 
	\caption{An illustration of the effect of taking the IDFT of blocks of transmit signals and the DFT of blocks of received signals.\label{networks}}
\end{figure}

Next, by looking at each component of ${\bf \tilde Y}_v$, we notice that we have effectively $b$ complex-valued received signals.
The additive noise on the $\ell$th received signal is given by
\al{
\DFT({\bf N}_v)_\ell & = \frac1{\sqrt b} \sum_{i=0}^{b-1} N_v[i] e^{-j2 \pi \frac{i \ell}{b}} \nonumber \\
& = \frac1{\sqrt b} \sum_{i=0}^{b-1} N_v[i] \cos\left( \frac{2 \pi i \ell}{b}\right) - j \, \frac1{\sqrt b} \sum_{i=0}^{b-1} N_{v}[i] \sin\left( \frac{2 \pi i \ell}{b}\right). \label{noiseexp}
}
By considering the real and imaginary parts of each component $\vectil{Y}_{v,i}$ of $\vectil{Y}_{v}$, for $i=0,...,b-1$, separately, we obtain the following $2b-2$ effective real-valued received signals:\al{
\begin{array}{lll}
{\rm (I)} & \vectil{Y}_{v,0} =  \sum_{u \in \I(v)} h_{u,v} {\bf d}_{u,0} + \DFT({\bf N}_v)_0 \\
  \noalign{\medskip}
{\rm (II)} &\Re\left[ \vectil{Y}_{v,i}\right] =  \sum_{u \in \I(v)} h_{u,v} {\bf d}_{u,2i-1} + \Re\left[ \DFT({\bf N}_v)_i \right] & \text{ for $i=1,...,\frac b2 - 1$}  \\
  \noalign{\medskip}
{\rm (III)} & \Im\left[ \vectil{Y}_{v,i}\right] =  \sum_{u \in \I(v)} h_{u,v} {\bf d}_{u,2i} + \Im\left[ \DFT({\bf N}_v)_i \right] & \text{ for $i=1,...,\frac b2 - 1$}  \\
  \noalign{\medskip}
{\rm (IV)} & \vectil{Y}_{v,b/2} =  \sum_{u \in \I(v)} h_{u,v} {\bf d}_{u,b-1} + \DFT({\bf N}_v)_{b/2} \\
  \noalign{\medskip}
{\rm (V)} & \Re\left[ \vectil{Y}_{v,i}\right] =  \sum_{u \in \I(v)} h_{u,v} {\bf d}_{u,2(b-i)-1} + \Re\left[ \DFT({\bf N}_v)_i \right] & \text{ for $i=\frac b2+1,...,b - 1$}  \\
  \noalign{\medskip}
{\rm (VI)} & \Im\left[ \vectil{Y}_{v,i}\right] =  - \sum_{u \in \I(v)} h_{u,v} {\bf d}_{u,2(b-i)} + \Im\left[ \DFT({\bf N}_v)_i \right] & \text{ for $i=\frac b2+1,...,b - 1$} 
\end{array}
\nonumber
} 
However, from the conjugate symmetry of $\DFT({\bf N}_v)$ (since ${\bf N}_v$ is a real-valued vector), we have that $\Re\left[ \DFT({\bf N}_v)_i \right] = \Re\left[ \DFT({\bf N}_v)_{b-i} \right]$ and $\Im\left[ \DFT({\bf N}_v)_i \right] = - \Im\left[ \DFT({\bf N}_v)_{b-i} \right]$, for $i=1,2,...,b-1$, and all the received signals in (V) and (VI) are repetitions (up to a change of sign) of the received signals in (II) and (III).
Therefore, we conclude that we have effectively $b$ distinct real-valued received signals with additive noise (i.e., the channels from (I), (II), (III) and (IV), which are the effective channel outputs shown in Fig.~\ref{effective}).
It is important to notice that the additive noise terms are \emph{dependent} across these $b$ received signals.
We also point out that the stricter power constraint in (\ref{powerdiv2}) will not constitute a problem.
The reason is that the effective received signals during the network uses corresponding to (\ref{powerdiv2}), given by (II) and (III), will be shown in the next Section to be subject to a noise with variance $\sigma_v^2/2$ as opposed to $\sigma_v^2$.
Thus, the effective SNR is still $P/\sigma_v^2$.

\end{subsection}

\begin{subsection}{Noise mixture converges to Gaussian Noise}
\label{noisesec}

In this Section, we show that the additive noise terms of the effective received signals we obtained in the previous Section approximate a Gaussian distribution as $b$ gets large.
In the remainder of the paper, we will write $X_n \stackrel{d}{\goesto} X$ to denote that the random variables $X_1,X_2,...$ converge in distribution to $X$, and $X_n \stackrel{p}{\goesto} X$ to denote that the random variables $X_1,X_2,...$ converge in probability to $X$.
We will use the following classical result.

\begin{theorem}[Lindeberg's Central Limit Theorem \cite{billingsley}] \label{lindthm}
Suppose that for each $b = 1,2,...$, the random variables
$
Y_{b,1}, Y_{b,2},..., Y_{b,b}
$
are independent.
In addition, suppose that, for all $b$ and $i \leq b$, $E[Y_{b,i}] = 0$, and let
\al{
s_b^2 = \sum_{i=1}^b E\left[ Y_{b,i}^2 \right].   \label{sbdef}
} 
Then, if for all $\vep > 0$, Lindeberg's condition
\al{
\frac{1}{s_b^2}\sum_{i=1}^b E\left(Y_{b,i}^2 \, \one\left\{|Y_{b,i}|  \geq \vep s_b\right\}\right) \goesto 0 \text{ as $b \goesto \infty$}
\label{lind}
}
holds, we have that
\aln{
\frac{\sum_{i=1}^b Y_{b,i}}{s_b} \stackrel{d}{\goesto} \N(0,1).
}
\end{theorem}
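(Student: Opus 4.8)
The plan is to prove convergence of the characteristic functions and conclude via the continuity theorem for characteristic functions \cite{billingsley}. First I would pass to the normalized array $X_{b,i} \defi Y_{b,i}/s_b$, so that $E[X_{b,i}]=0$, $\sum_{i=1}^{b} E[X_{b,i}^2]=1$, and the claim becomes $\sum_{i=1}^{b} X_{b,i} \cond \N(0,1)$. Writing $\sigma_{b,i}^2 \defi E[X_{b,i}^2]$, a preliminary fact I would record is \emph{uniform asymptotic negligibility}: for every $\vep>0$ and every $i\le b$,
\[
\sigma_{b,i}^2 \;\le\; \vep^2 + \frac{1}{s_b^2}\sum_{k=1}^{b} E\left(Y_{b,k}^2\,\one\left\{|Y_{b,k}| \ge \vep s_b\right\}\right),
\]
so $\max_{i\le b}\sigma_{b,i}^2 \le \vep^2 + L_b(\vep)$, where $L_b(\vep)$ is exactly the Lindeberg sum in (\ref{lind}); since $L_b(\vep)\to 0$ and $\vep$ is arbitrary, $\max_{i\le b}\sigma_{b,i}^2\to 0$.

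The core estimate is the elementary bound $\bigl|e^{jx} - 1 - jx + x^2/2\bigr| \le \min\bigl(|x|^3/6,\, x^2\bigr)$ for real $x$ (here $j=\sqrt{-1}$, as elsewhere in the paper). Applying it inside the expectation and using $E[X_{b,i}]=0$ gives, for each fixed $t$,
\[
\phi_{b,i}(t) \defi E\bigl[e^{jtX_{b,i}}\bigr] \;=\; 1 - \tfrac12 t^2\sigma_{b,i}^2 + \rho_{b,i}(t), \qquad |\rho_{b,i}(t)| \le E\Bigl[\min\bigl(\tfrac16|tX_{b,i}|^3,\, t^2 X_{b,i}^2\bigr)\Bigr].
\]
To control $\sum_i|\rho_{b,i}(t)|$ I would split each expectation over $\{|X_{b,i}|<\vep\}$ and its complement, bounding the minimum by $\tfrac{|t|^3\vep}{6}X_{b,i}^2$ on the first event and by $t^2 X_{b,i}^2$ on the second. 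Summing over $i$ yields $\sum_{i=1}^{b}|\rho_{b,i}(t)| \le \tfrac{|t|^3\vep}{6} + t^2 L_b(\vep)$, so letting $b\to\infty$ and then $\vep\downarrow 0$ gives $\sum_{i=1}^{b}|\rho_{b,i}(t)|\to 0$.

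Next I would invoke the standard lemma that $\bigl|\prod_i z_i - \prod_i w_i\bigr| \le \sum_i |z_i-w_i|$ whenever all $z_i,w_i$ are complex numbers of modulus at most $1$. Taking $z_i=\phi_{b,i}(t)$ (a characteristic function, hence $|z_i|\le1$) and $w_i = 1-\tfrac12 t^2\sigma_{b,i}^2$ (of modulus $\le1$ once $b$ is large, by negligibility), independence of $X_{b,1},\dots,X_{b,b}$ gives that the characteristic function of $\sum_i X_{b,i}$ is $\prod_i\phi_{b,i}(t)$ and
\[
\Bigl|\, \prod_{i=1}^{b}\phi_{b,i}(t) - \prod_{i=1}^{b}\bigl(1-\tfrac12 t^2\sigma_{b,i}^2\bigr)\Bigr| \;\le\; \sum_{i=1}^{b}|\rho_{b,i}(t)| \;\to\; 0.
\]
Since $\sum_i \tfrac12 t^2\sigma_{b,i}^2 = \tfrac12 t^2$ is constant while $\max_i \tfrac12 t^2\sigma_{b,i}^2\to0$, a first-order expansion of $\log(1-\cdot)$ gives $\prod_i(1-\tfrac12 t^2\sigma_{b,i}^2)\to e^{-t^2/2}$; hence $\prod_i\phi_{b,i}(t)\to e^{-t^2/2}$ for every $t\in\R$, and the continuity theorem \cite{billingsley} yields $\sum_{i=1}^{b} X_{b,i} \cond \N(0,1)$, i.e.\ $s_b^{-1}\sum_{i=1}^{b} Y_{b,i} \cond \N(0,1)$.

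I expect the main obstacle to be the bookkeeping in the remainder step: fixing the truncation level, arranging that one piece of $\sum_i|\rho_{b,i}(t)|$ is absorbed by the fixed total variance and the other by the vanishing Lindeberg sum, and taking the iterated limit ($b\to\infty$ first, then $\vep\downarrow0$) in the right order; there is no deep difficulty beyond this. As a fallback I would keep Lindeberg's original swapping argument in mind: introduce independent $Z_{b,i}\sim\N(0,\sigma_{b,i}^2)$, replace the $X_{b,i}$ by the $Z_{b,i}$ one index at a time, and telescope $E[f(\sum_i X_{b,i})] - E[f(\sum_i Z_{b,i})]$ for $f\in C^3$ with bounded derivatives using a third-order Taylor expansion, where the same Lindeberg splitting controls the accumulated error and the negligibility fact controls the Gaussian contributions, giving convergence of $E[f(\cdot)]$ for all such $f$ and hence convergence in distribution.
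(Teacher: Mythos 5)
The paper does not prove Theorem~\ref{lindthm}: it is invoked as a classical result and attributed directly to \cite{billingsley}, so there is no in-paper argument to compare against. Your write-up is, in substance, the standard characteristic-function proof of Lindeberg's CLT that one finds in that reference, and it is correct. The bookkeeping you flag as the ``main obstacle'' is handled properly: uniform asymptotic negligibility follows from $\sigma_{b,i}^2 \le \vep^2 + L_b(\vep)$; the remainder bound $\sum_{i=1}^b|\rho_{b,i}(t)| \le \tfrac{|t|^3\vep}{6} + t^2 L_b(\vep)$ uses $\sum_i E[X_{b,i}^2]=1$ on the truncated piece; the product-difference lemma applies once $b$ is large enough that $\tfrac12 t^2\sigma_{b,i}^2\le 2$ for all $i$ (which negligibility guarantees for each fixed $t$); and the passage $\prod_i(1-\tfrac12 t^2\sigma_{b,i}^2)\to e^{-t^2/2}$ is justified by $\bigl|\sum_i\log(1-a_i)+\sum_i a_i\bigr| \le C\,(\max_i a_i)\sum_i a_i \to 0$ with $\sum_i a_i = t^2/2$ fixed. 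The fallback you mention (Lindeberg's swapping/telescoping argument with $C^3$ test functions) is also a valid route and is closer to Lindeberg's original proof; the characteristic-function route you chose is the one matching the cited source, so it is the more faithful choice here. Nothing to fix.
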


Lindeberg's CLT can be used to prove the following lemma.

\begin{lemma} \label{noiselem}
Let $N[0], N[1], N[2],...$ be i.i.d.~random variables that are zero-mean, have variance $\sigma^2$ 
and let 
\al{
Z_b & = \frac1{\sqrt b} \sum_{i=0}^{b-1} N[i] \cos\left( \frac{2 \pi i \ell_b}{b}\right), \label{lemeq}}
for some $\ell_b \in \{1,...,b-1\} \setminus \{b/2\}$.
Then, $Z_b$ converges in distribution to $\N(0,\sigma^2/2)$ as $b \to \infty$.
\end{lemma}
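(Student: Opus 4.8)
The plan is to apply Lindeberg's Central Limit Theorem (Theorem \ref{lindthm}) to the triangular array
\[
Y_{b,i} \;=\; \frac{1}{\sqrt b}\, N[i] \cos\!\left(\frac{2\pi i \ell_b}{b}\right), \qquad i = 0, 1, \dots, b-1 ,
\]
which (after the obvious reindexing to match the statement of Theorem \ref{lindthm}) is a collection of $b$ independent, zero-mean random variables for each fixed $b$, since they are deterministic multiples of the independent variables $N[0],\dots,N[b-1]$. Note that $\sum_{i=0}^{b-1} Y_{b,i} = Z_b$, so once Lindeberg's condition \eqref{lind} is verified, the theorem delivers $Z_b/s_b \cond \N(0,1)$, and it remains only to identify $s_b^2$.

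First I would compute the normalizing constant
\[
s_b^2 \;=\; \sum_{i=0}^{b-1} E\!\left[Y_{b,i}^2\right] \;=\; \frac{\sigma^2}{b}\sum_{i=0}^{b-1}\cos^2\!\left(\frac{2\pi i \ell_b}{b}\right).
\]
Using $\cos^2\theta = \tfrac12 + \tfrac12\cos 2\theta$ and writing the residual cosine sum as the real part of the geometric series $\sum_{i=0}^{b-1}\big(e^{j4\pi\ell_b/b}\big)^i$, I would observe that the ratio $e^{j4\pi\ell_b/b}$ equals $1$ exactly when $2\ell_b \equiv 0 \pmod b$, which for $\ell_b \in \{1,\dots,b-1\}$ happens only at $\ell_b = b/2$ --- precisely the value excluded by hypothesis. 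Hence the series telescopes to $\big(e^{j4\pi\ell_b}-1\big)\big/\big(e^{j4\pi\ell_b/b}-1\big) = 0$ (using $\ell_b \in \IN$), so $\sum_{i=0}^{b-1}\cos^2(2\pi i\ell_b/b) = b/2$ and therefore $s_b^2 = \sigma^2/2$ for \emph{every} $b$. (We may assume $\sigma^2>0$; the case $\sigma^2=0$ is trivial since then $Z_b\equiv 0$.)

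Next I would verify Lindeberg's condition. Since $|\cos(\cdot)|\le 1$, we have the pointwise bounds $Y_{b,i}^2 \le N[i]^2/b$ and $\{|Y_{b,i}|\ge \vep s_b\} \subseteq \{|N[i]|\ge \vep s_b\sqrt b\} = \{|N[i]|\ge \vep\sigma\sqrt{b/2}\}$. Since the $N[i]$ are identically distributed, summing $b$ identical contributions gives
\[
\frac{1}{s_b^2}\sum_{i=0}^{b-1} E\!\left(Y_{b,i}^2\,\one\{|Y_{b,i}|\ge\vep s_b\}\right) \;\le\; \frac{2}{\sigma^2}\, E\!\left(N[0]^2\,\one\{|N[0]|\ge\vep\sigma\sqrt{b/2}\}\right).
\]
As $b\to\infty$ the threshold $\vep\sigma\sqrt{b/2}\to\infty$, and since $E[N[0]^2]=\sigma^2<\infty$, dominated convergence forces the right-hand side to $0$. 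Thus \eqref{lind} holds, and Theorem \ref{lindthm} yields $Z_b/s_b \cond \N(0,1)$, i.e. $Z_b \cond \N(0,\sigma^2/2)$.

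The only mildly delicate step --- and the one I would be most careful to spell out --- is the exact evaluation $s_b^2 = \sigma^2/2$, in particular the claim that the auxiliary cosine sum vanishes for every admissible $\ell_b$; this is the single place where the hypothesis $\ell_b \neq b/2$ enters, and it is worth noting explicitly that no other $\ell_b \in \{1,\dots,b-1\}$ can make $e^{j4\pi\ell_b/b}=1$. Everything else is a routine combination of the stated CLT with a dominated-convergence argument.
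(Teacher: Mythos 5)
Your proof is correct and follows the same overall strategy as the paper's: recast the sum as a triangular array, compute $s_b^2$ exactly by collapsing the $\cos^2$ sum into a geometric series that vanishes precisely because $\ell_b \notin \{0,b/2\}$, and then verify Lindeberg's condition and invoke Theorem~\ref{lindthm}. The one place you genuinely diverge is the verification of Lindeberg's condition. The paper argues along arbitrary index sequences $i_b$, shows $U_{b,i_b} \stackrel{p}{\to} 0$, invokes a convergence-in-probability version of the Dominated Convergence Theorem to get $E[U_{b,i_b}] \to 0$, and then bounds the Lindeberg sum by $\tfrac{2}{\sigma^2}\max_i E[U_{b,i}]$. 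You instead use the pointwise bounds $Y_{b,i}^2 \le N[i]^2/b$ and $\{|Y_{b,i}| \ge \vep s_b\} \subseteq \{|N[i]| \ge \vep\sigma\sqrt{b/2}\}$ together with the identical distribution of the $N[i]$, which collapses the entire Lindeberg sum into a single expectation $E\big[N[0]^2\,\one\{|N[0]|\ge\vep\sigma\sqrt{b/2}\}\big]$ that vanishes by ordinary dominated convergence. Your route is shorter, avoids the subsequence bookkeeping, and uses only the textbook form of DCT; both are correct. (You also absorbed the $1/\sqrt b$ into the definition of $Y_{b,i}$, making $s_b^2 = \sigma^2/2$ constant rather than $b\sigma^2/2$ as in the paper; this is a purely cosmetic difference.)
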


\begin{proof}
We start by letting $Y_{b,i+1} = N[i] \cos\left( \frac{2 \pi i \ell_b}{b}\right)$, for $i=0,1,...,b-1$.
Then, by following (\ref{sbdef}), we have
\aln{
s_b^2 & = \sum_{i=1}^{b} E\left[Y_{b,i}^2\right] = \sum_{i=0}^{b-1} E\left[N[i]^2\right] \cos^2\left( \frac{2 \pi i \ell_b}{b}\right)  \\
& =  \frac{\sigma^2}{4} \sum_{i=0}^{b-1} \left( e^{j2\pi \ell_b \frac{ i}{b}} + e^{-j2\pi \ell_b \frac{ i}{b}} \right)^2
=  \frac{\sigma^2}{4} \sum_{i=0}^{b-1} \left( e^{j4\pi \ell_b \frac{ i}{b}} + e^{-j4\pi \ell_b \frac{ i}{b}} + 2 \right) \\ 
& =  \frac{b \sigma^2}{2} + \frac{\sigma^2}{4} \sum_{i=0}^{b-1} \left( e^{j4\pi \ell_b \frac{ i}{b}} + e^{-j4\pi \ell_b \frac{ i}{b}} \right) 
 =  \frac{b \sigma^2}{2} + \frac{\sigma^2(1- e^{j4\pi \ell_b})}{4(1-e^{j4\pi \ell_b \frac{1}{b}})} + \frac{\sigma^2(1- e^{-j4\pi \ell_b})}{4(1-e^{-j4\pi \ell_b \frac{1}{b}})} = \frac{b \sigma^2}{2}.
}
The last equality follows because $e^{-j 4 \pi \ell_b} = 1$ and $e^{j 4 \pi \ell_b \frac{1}{b}} \ne 1$ for any $\ell_b \in \{1,...,b-1\} \setminus \{b/2\}$.
Next we let $U_{b,i} = Y_{b,i}^2 \, \one \left\{|Y_{b,i}|  \geq \vep s_b \right\} = Y_{b,i}^2 \, \one \left\{|Y_{b,i}|  \geq \vep \sigma \sqrt\frac{b}{2} \right\}$.
Consider any sequence $i_b$, for $b=1,2,...$, such that $i_b \in \{1,...,b\}$, and any $\delta > 0$.
Then we have that
\aln{
\Pr\left(U_{b,i_b} < \delta\right) & \geq \Pr\left(|Y_{b,i_b}| < \vep \sigma \sqrt{b/2} \right) \\
& \geq \Pr\left(|N[i_b-1]| < \vep \sigma \sqrt{b/2} \right) \\
& = \Pr\left(|N[1]| < \vep \sigma \sqrt{b/2} \right) \goesto 1, \text{ as $b \goesto \infty$},
}
which means that $U_{b,i_b} \stackrel{p}{\goesto} 0$ as $b \goesto \infty$.
Moreover, we have that $|U_{b,i_b}| = U_{b,i_b} \leq N[i_b-1]^2$ for all $b$, and $E\left[N[i_b-1]^2\right] =  \sigma^2 < \infty$.
Next, we notice that $N[i-1] \sim N[1]$ for all $i \geq 1$, which implies that, for any $\tau > 0$,
\aln{
\Pr\left[ |U_{b,i_b}| \geq \tau \right] \leq \Pr\left[ N[i_b-1]^2 \geq \tau \right] = \Pr\left[ N[1]^2 \geq \tau \right].
}
Thus, we can apply the version of the Dominated Convergence Theorem described in pages 338-339 of \cite{billingsley}, to conclude that $E[U_{b,i_b}]\goesto 0$ as $b \to \infty$.
We conclude that
\aln{
\frac{1}{s_b^2}\sum_{i=1}^b E\left(Y_{b,i}^2 \, \one\left\{|Y_i|  \geq \vep s_b\right\}\right) & = \frac{2}{\sigma^2 b} \sum_{i=1}^b E\left[ U_{b,i} \right] \leq \frac{2}{\sigma^2} \max_{1 \leq i \leq b} E\left[ U_{b,i} \right]
 \to 0 \text{ as $b \to \infty$,}
}
and Lindeberg's condition (\ref{lind}) is satisfied for any $\vep > 0$.
Hence, from Theorem \ref{lindthm}, we have that
\aln{
\frac{\sum_{i=1}^{b} Y_{b,i}}{\sigma \sqrt{b/2}} \stackrel{d}{\goesto} \N(0,1) \quad
\Longrightarrow \quad 
Z_b = \frac{\sigma}{\sqrt2} \frac{\sum_{i=1}^{b} Y_{b,i}}{\sigma \sqrt{b/2}}  \stackrel{d}{\goesto} \N(0,\sigma^2/2).} 
\end{proof}

Now consider the additive noise term in (II).
It is the real part of (\ref{noiseexp}), which, by Lemma \ref{noiselem}, converges in distribution to $\N(0,\sigma_v^2/2)$, as $b \to \infty$.
Moreover, it is easy to see that Lemma \ref{noiselem} can be restated with sines replacing the cosines, and the same result will hold.
Thus, the additive noise in (III) also converges in distribution to $\N(0,\sigma_v^2/2)$.
Finally, for the received signals in (I) and (IV), it is easy to see that the additive noise in (\ref{noiseexp}) only has a real component, and by the usual Central Limit Theorem, it converges in distribution to $\N(0,\sigma_v^2)$.

Notice that, since in (\ref{powerdiv2}) we restricted the power used in the network uses corresponding to (II) and (III) to $P/2$, all of our effective channels have the same $\SNR$ they would have if the transmit signals had power $P$ and the noise variance $\sigma_v^2$.
Therefore, for the network uses corresponding to (II) and (III), we can instead assume that the power constraint is $P$, but all nodes divide their transmit signals by $\sqrt2$ prior to transmission, and multiply their received signals by $\sqrt2$.
This yields the following $b$ effective channels,
\al{
\begin{array}{lll}
{\rm (I)} & \vectil{Y}_{v,0} =  \sum_{u \in \I(v)} h_{u,v} {\bf d}_{u,0} + \DFT({\bf N}_v)_0 \\
  \noalign{\medskip}
{\rm (II')} & \sqrt2 \cdot \Re\left[ \vectil{Y}_{v,i}\right] =  \sum_{u \in \I(v)} h_{u,v} {\bf d}_{u,2i-1} + \sqrt2 \cdot \Re\left[ \DFT({\bf N}_v)_i \right] & \text{ for $i=1,...,\frac b2 - 1$}  \\
  \noalign{\medskip}
{\rm (III')} & \sqrt2 \cdot \Im\left[ \vectil{Y}_{v,i}\right] = \sum_{u \in \I(v)} h_{u,v} {\bf d}_{u,2i} + \sqrt2 \cdot \Im\left[ \DFT({\bf N}_v)_i \right] & \text{ for $i=1,...,\frac b2 - 1$} \\
{\rm (IV)} & \vectil{Y}_{v,b/2} =  \sum_{u \in \I(v)} h_{u,v} {\bf d}_{u,b-1} + \DFT({\bf N}_v)_{b/2} \\
  \noalign{\medskip}
\end{array}
\nonumber
} 
all of which have input power constraint $P$ and additive noise with variance $\sigma_v^2$.
The diagram describing the steps that create the effective channel from Fig.~\ref{effective} can then be updated as shown in Fig.~\ref{effective2}. 
\begin{figure}[ht] 
     \centering
       \includegraphics[height=45mm]{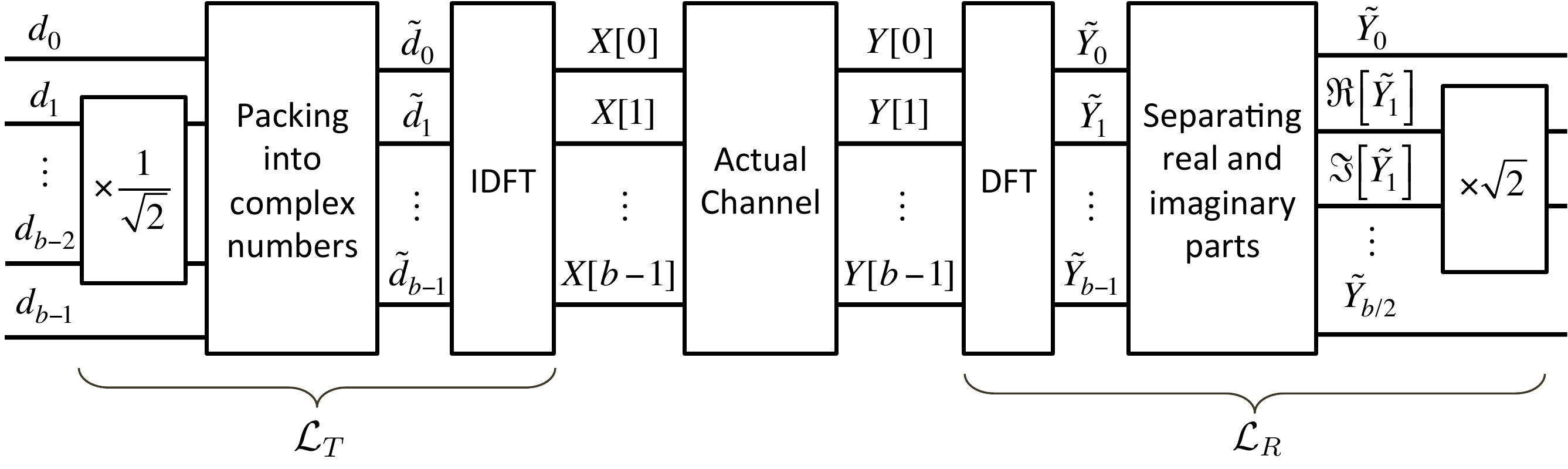} 
	\caption{Diagram of the steps that create the effective channel.
	The overall transformation between the effective channel inputs and the actual channel inputs is represented by a linear transformation $\L_T$ and the overall transformation between the actual channel outputs and the effective channel outputs is represented by a linear transformation $\L_R$. \label{effective2}}
\end{figure}
We notice that the transformation between the $b$ inputs to the effective channels and the $b$ inputs to the actual channel is in fact a 2-norm-preserving linear transformation, which we call $\L_T$.
Similarly, the transformation between the $b$ outputs of the actual channel and the $b$ output of our effective channel is also a 2-norm-preserving linear transformation, which we call $\L_R$.


Now consider any sequence $\ell_b$, $b=1,2,...$, where $\ell_b \in \{0,...,b-1\}$.
Let $Z_{b,\ell_b}$ now be the additive noise term of the $\ell_b$th effective channel above.
The sequence indices $b \in \{1,2,...\}$ can be partitioned into four sets $J_{1}$, $J_{2}$, $J_{3}$ and $J_{4}$, according to whether $Z_{b,\ell_b}$ corresponds to the additive noise of an effective channel of type (I), (II'), (III') or (IV).
According to Lemma \ref{noiselem}, if $J_2$ or $J_3$ are infinite sets, the subsequence that they define $\{Z_{b,\ell_b}\}_{b \in J_2}$ or $\{Z_{b,\ell_b}\}_{b \in J_3}$ converge in distribution to $\N(0,\sigma_v^2)$ (after the multiplication by $\sqrt2$).
Moreover, as we noticed above, from the usual Central Limit Theorem, it follows that if $J_1$ or $J_4$ are infinite sets, the subsequences defined by $\{Z_{b,\ell_b}\}_{b \in J_1}$ or $\{Z_{b,\ell_b}\}_{b \in J_4}$ also converge in distribution to $\N(0,\sigma_v^2)$. 
Therefore, we conclude that, for any arbitrary sequence $\ell_b$, $b=1,2,...$, where $\ell_b \in \{0,...,b-1\}$, $Z_{b,\ell_b}$ converges in distribution to $\N(0,\sigma_v^2)$.


\end{subsection}


\begin{subsection}{Interleaving and Outer Code}
\label{outercodesec}


In this Section, we address the fact that, as we mentioned before, the additive noise at node $v$ in the $b$ effective network uses are dependent of each other.
In order to handle this dependence, we consider using the network for a total of $bk$ times, performing the OFDM-like approach from Section \ref{ofdmsec} within each block of $b$ time steps.
Then, by interleaving the symbols, it is possible to view the result as $b$ blocks of $k$ network uses.
This idea is illustrated in Fig.~\ref{interleavefig}.
\begin{figure}[ht] 
     \centering
       \includegraphics[height=53mm]{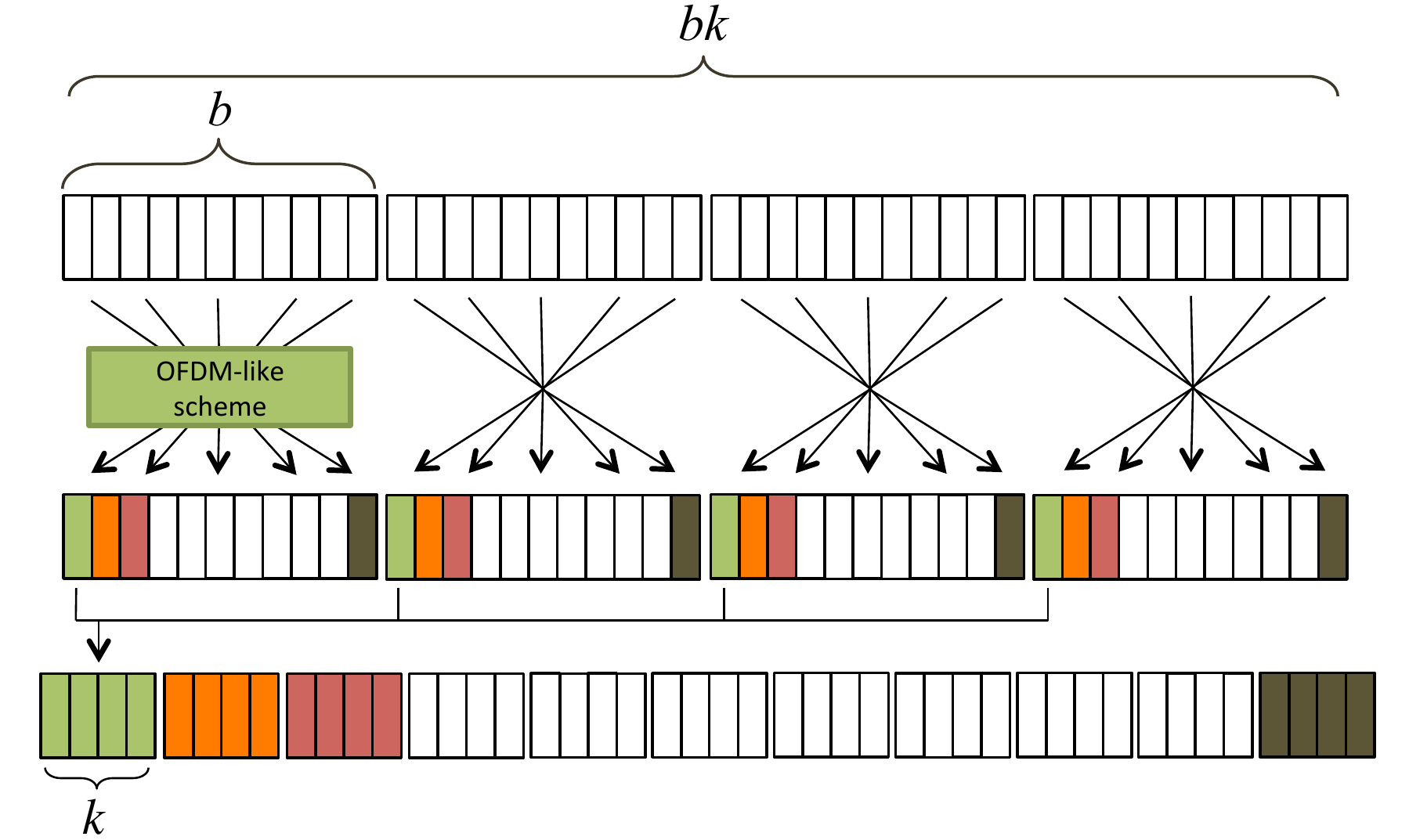} 
	\caption{Interleaving the effective network uses obtained from the OFDM-like scheme.\label{interleavefig}}
\end{figure}
Notice that, within each block of $k$ network uses, the additive noises are i.i.d., but they are dependent among distinct blocks.
Intuitively, this makes each of these blocks of $k$ network uses suitable for the application of a coding scheme $\C_k$ with block length $k$.
The dependence between the noises of different blocks of length $k$ will be handled at the end of this Section, through the application of a random outer code. 
Then, by considering a mutual-information argument, we will show that the performance of the resulting coding scheme on the wireless network with non-Gaussian noises is essentially the same as the performance of the original coding scheme $\C_k$ on the AWGN version of the network.

\begin{example}
Consider a simple relay channel, defined by a graph $G=(V,E)$, where $V = \{s,v,d\}$ and $E=\{(s,v),(s,d),(v,d)\}$. 
Suppose we have a coding scheme $\C_k$ of block length $k$ and rate $R$ for this network.
The operations performed by the nodes under this scheme at time $t$ can be illustrated as in Fig.~\ref{relay1}.
\begin{figure}[ht] 
     \centering
       \includegraphics[height=33mm]{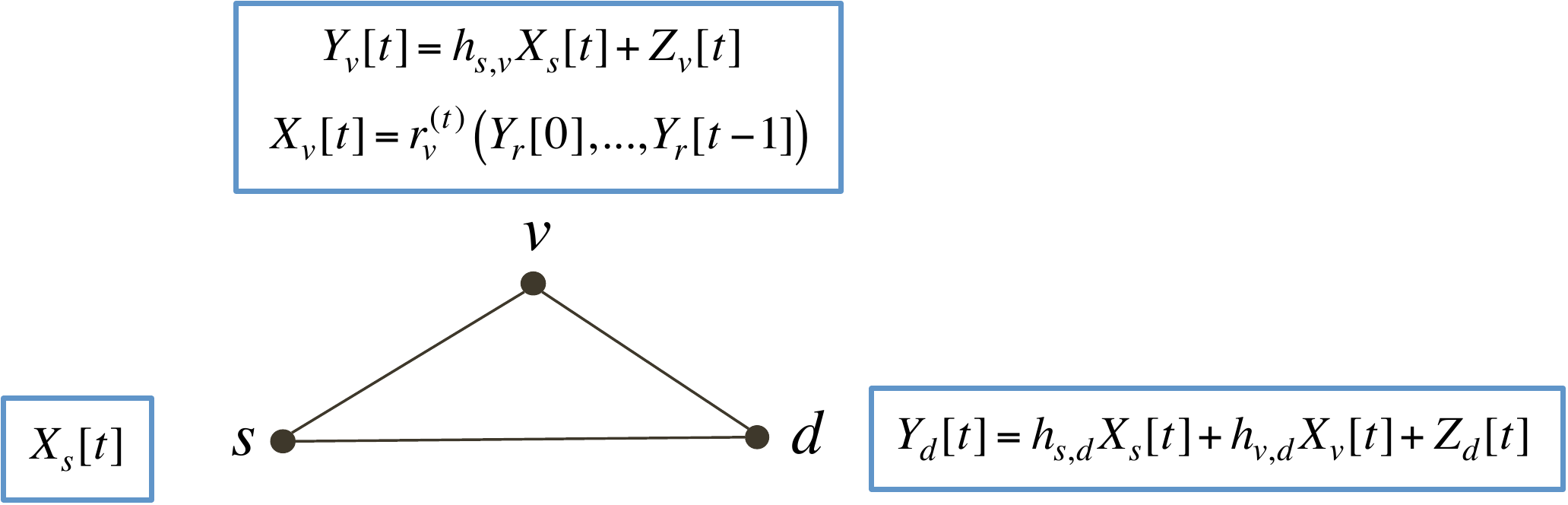} 
	\caption{Illustration of a coding scheme $\C_k$ for a relay channel at time $t$.
At times $t=0,...,k-1$, the source $s$ transmits $X_s[t]$, which is the $(t+1)$th entry of the chosen codeword $f(w)$, for $w \in \{1,...,2^{kR}\}$.
The relay $v$ applies the relaying function $r_v^{(t)}$ to the signals it received up to time $t-1$, to obtain $X_v[t]$, which is then transmitted.
The destination $d$ waits until the end of the length-$k$ block and applies the decoding function $g$ to the block of received signals $(Y_d[0],...,Y_d[k-1])$.
\label{relay1}}
\end{figure}
Now suppose we want to apply the OFDM-like scheme and the interleaving procedure to this coding scheme $\C_k$.
\begin{figure}[ht] 
     \centering
     \subfigure[]{
       \includegraphics[height=75mm]{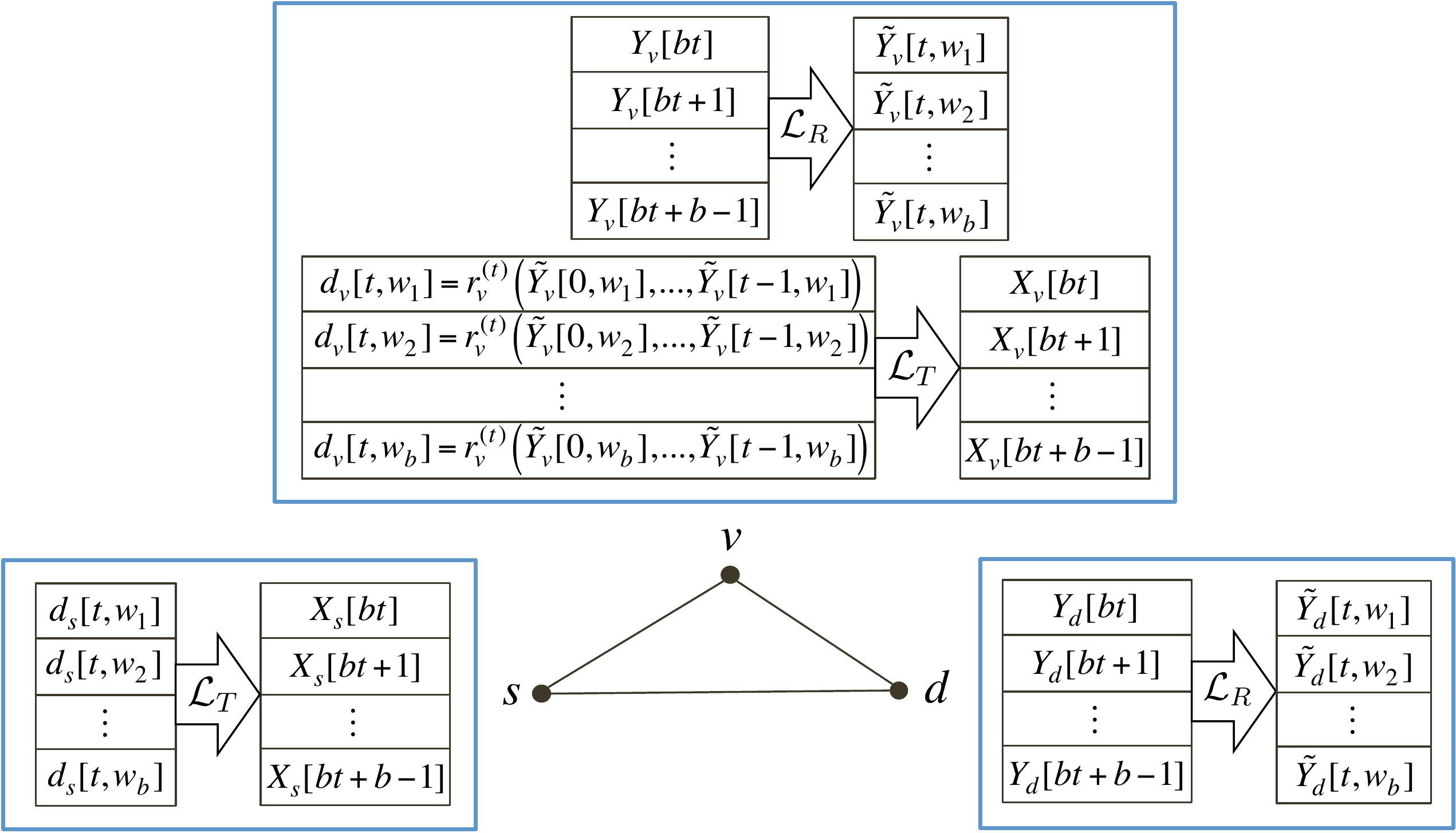} \label{relay2}} 
    \hspace{7mm}
    \subfigure[]{
       \includegraphics[height=35mm]{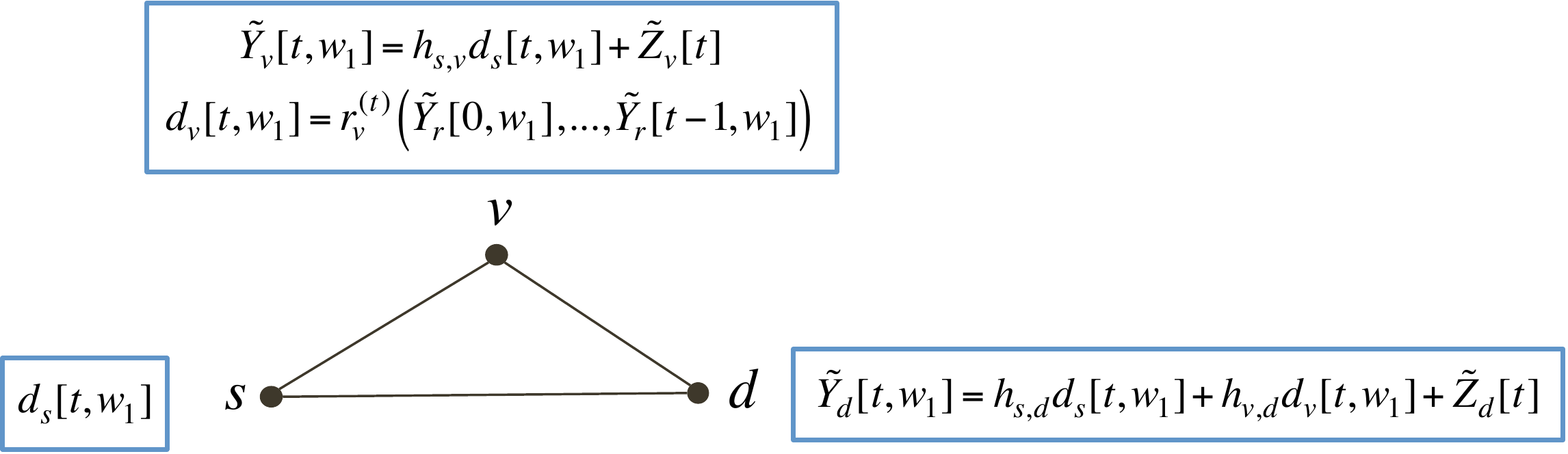} \label{relay3}}
	\caption{(a) Illustration of the source, relay and destination operations, after applying the OFDM-like scheme and the interleaving procedure to a coding scheme $\C_k$. 
The source $s$ chooses $b$ messages $w_1,...,w_b \in \{1,...,2^{kR}\}$.
This yields $b$ codewords $f(w_1),...,f(w_b)$ which form the inputs $d_s(t,w_1),...,d_s(t,w_b)$, for $t=0,...,k-1$, to the effective channel.
At times $bt,bt+1,...,bt+b-1$ for $t=0,...,k-1$, $s$ transmits the $b$ signals that result from applying $\L_T$ to the vector $(d_s[t,w_1],...,d_s[t,w_b])$.
At time $bt+b-1$, for $t=0,...,k-1$, the relay $v$ finishes receiving the signals of a length-$b$ block and can apply $\L_R$ to them.
At time $bt$, for $t=1,...,k-1$, using all previously received effective signals, the relay can use relaying function $r_v^{(t)}$ $b$ times to obtain 
$(d_r[t,w_1],...,d_r[t,w_b])$.
After applying $\L_T$ to this vector, the relay obtains the $b$ signals to be transmitted at times $bt,bt+1,...,bt+b-1$.
The destination, at time $bt+b-1$, for $t=0,...,k-1$, finishes receiving the signals of a length-$b$ block and can apply $\L_R$ to them.
(b) Effective network experienced by the signals indexed by $w_1$.
}
\end{figure}
In essence, $b$ versions of this coding scheme will be simultaneously used.
Encoding, relaying and decoding functions are applied ``in parallel'' for each of the $b$ coding schemes, as shown in Fig. \ref{relay2} in detail.
First, $b$ codewords $f(w_1),...,f(w_b)$ are chosen at the source.
At times $bt,bt+1,...,bt+b-1$ for $t=0,...,k-1$, the source transmits the $b$ signals obtained by applying $\L_T$ to the vector formed by the $(t+1)$th entries of these $b$ codewords.
Relay $v$, in turn, after applying $\L_R$ to the received signals at times $bt,bt+1,...,bt+b-1$ for $t=0,...,k-1$, can use the relaying function $r_v^{(t+1)}$ a total of $b$ times in order to obtain a length-$b$ vector that goes through the transformation $\L_T$ to yield the $b$ signals to be transmitted at times $b(t+1),b(t+1)+1,...,b(t+1)+b-1$ for $t=0,...,k-2$.
The destination, after applying $\L_R$ to each block of $b$ received signals, obtains $b$ sequences of $n$ received signals, and can apply its decoding function to each of these sequences.
As shown in Fig. \ref{relay2}, the application of the transformations $\L_T$ and $\L_R$ can be seen as creating $b$ effective networks, where the transmit and received signals of the $i$th effective network are given by $d[t,w_i]$ and $\tilde Y[t,w_i]$ respectively.




The purpose of the interleaving procedure can be understood if we focus on what occurs to the signals in one of these effective networks, say the one indexed by $w_1$.
By absorbing the transformations $\L_T$ and $\L_R$ into the network, and viewing the $d[t,w_1]$s and $\tilde Y[t,w_1]$s as inputs and outputs of the network, the network that is effectively experienced by the signals indexed by $w_1$ is shown in Fig.~\ref{relay3}.
Notice that the effective network in Fig.~\ref{relay3} is the same as the original network in Fig.~\ref{relay1} but with different additive noise terms $\tilde Z_v[t]$ and $\tilde Z_d[t]$.
These effective noise terms are in fact i.i.d., since the operations $\L_T$ and $\L_R$ are applied to blocks of signals with different indices $w_1,w_2,...,w_b$, and this cannot create dependence between effective noises $\tilde Z_v[t]$ and $\tilde Z_v[t']$ (or $\tilde Z_d[t]$ and $\tilde Z_d[t']$) for $t \ne t'$, since they both correspond to received signals indexed by $w_1$.
Therefore, we are essentially applying coding scheme $\C_k$ in $b$ parallel effective relay channels, each of which has i.i.d.~noises at $v$ and $d$.
\qed

\end{example}

Since from the statement of Theorem \ref{mainthm}, the rate tuple $\vec R$ is achievable by coding schemes with finite reading precision, we may assume that we have a sequence of coding schemes $\C_k$ (with block length $k$ and rate tuple $\vec R$) with finite reading precision $\rho_k$, whose error probability when used on the AWGN network is 
$
\ep_k = P_{\rm error}(\C_k)$,
and satisfies $\ep_k \to 0$ as $k \to \infty$.
Now, consider applying this code over each of the $b$ blocks of length-$k$ that we obtained from the interleaving, as demonstrated in Example 1.
Over each block of length $k$, the noises at all nodes are independent and i.i.d.~over time, and, if $b$ is chosen fairly large, they are very close to Gaussian in distribution, and, intuitively, the error probability we obtain should be close to $\ep_k$.
The actual distribution of the additive noise at each of these $b$ length-$k$ blocks is given by the noise terms in (I), (II'), (III') and (IV).
For $\ell=0,...,b-1$, we let $\ep_{k,b}^{(\ell)}$ be the error probability of coding scheme $\C_k$ applied on the $(\ell+1)$th such block, for which the i.i.d. additive noise at node $v$ is given by
\aln{
Z_{v,b}^{(\ell)} = \left\{
\begin{array}{lll}
\DFT({\bf N}_v)_0 & \text{ for $\ell=0$}  \\
\sqrt2 \cdot \Re\left[ \DFT({\bf N}_v)_\ell \right] & \text{ for $\ell=1,...,\frac b2 - 1$}  \\
\sqrt2 \cdot \Im\left[ \DFT({\bf N}_v)_{(1+\ell-b/2)} \right] & \text{ for $\ell=\frac b2,...,b - 2$} \\
\DFT({\bf N}_v)_{b/2} & \text{ for $\ell=b-1$.}
\end{array}\right.
\nonumber
} 
Then, for each value of $b$, we let $\ep_{k,b} = \max_{0 \leq \ell \leq b-1} \ep_{k,b}^{(\ell)}$, and $\ell_b = \arg \max_{0 \leq \ell \leq b-1} \ep_{k,b}^{(\ell)}$,
%
%
%
%
%
which defines a sequence $\ell_b$, $b=1,2,...$ like the ones considered at the end of Section \ref{noisesec}.

We let $\vec Z_b \in \R^{k|V|}$ be the random vector associated with the effective additive noises at all nodes in $V$ during the $\ell_b$th length-$k$ block assuming that we performed the OFDM-like scheme in blocks of size $b$; i.e.,
\aln{
\vec Z_b = \left( Z_{v,b}^{(\ell_b)}[t]\right)_{v \in V, 0\leq t \leq k-1}.
}
Since each component of $\vec Z_b$ is independent and they all converge in distribution to a zero-mean Gaussian random variable, we have that $\vec Z_b$ converges in distribution to a Gaussian random vector.
We let $\vec Z$ be this limiting distribution, and we know that the component of $\vec Z$ corresponding to node $v$ and time $t$ is distributed as $\N(0,\sigma_v^2)$,
 for any $t \in \{0,...,k-1\}$.
Now notice that, if we fix the messages chosen at the sources to be $\vec w = (w_1,w_2,...,w_{L}) \in \prod_{i=1}^{L} \{1,...,2^{k R_i}\}$, then, whether $\C_k$ makes an error is only a deterministic function of $\vec Z_b$.
Therefore, for each ${\bf w} \in \prod_{i=1}^{L} \{1,...,2^{k R_i}\}$, we can define an error set $A_{\bf w}$, corresponding to all realizations of $\vec Z_b$ that cause coding scheme $\C_k$ to make an error.
It is important to notice that $A_{\bf w}$ is independent of the actual joint distribution of the noise terms; it only depends on the coding scheme $\C_k$.
Then we can write
\al{
\ep_{k,b} = 2^{-k \sum_{i=1}^{L} R_i} \sum_{\bf w} \Pr \left[ \vec Z_b \in A_{\bf w} \right] \label{epbk}
}
and also
\al{
\ep_{k} = 2^{-k \sum_{i=1}^{L} R_i} \sum_{\bf w} \Pr \left[ \vec Z \in A_{\bf w} \right]. \label{epbk2}
}
Our first goal is to show that $\ep_{k,b} \to \ep_k$ as $b \to \infty$.
Recall that a Borel set $A \subset \R^m$ is said to be a $\mu$-continuity set for some probability measure $\mu$ on $\R^m$, if $\mu(\partial A) = 0$, where $\partial A$ is the boundary of $A$ (see, for example, \cite{billingsley}).
Next, we state the following classical result, which provides an alternative characterization of convergence in distribution.

\begin{theorem}[Portmanteau Theorem \cite{BillingsleyConvergence}] \label{portmanteau}
Suppose we have a sequence of random vectors $\vec Z_b \in \R^{k|V|}$ and another random vector $\vec Z \in \R^{k|V|}$.
Let $\mu_b$ and $\mu$ be the probability measures on $\R^{k |V|}$ associated to $\vec Z_b$ and $\vec Z$ respectively.
Then $\vec Z_b$ converges in distribution to $\vec Z$ if and only if
\aln{
\lim_{b \to \infty} \mu_b(A) = \mu(A)
}
for all $\mu$-continuity sets $A$.
\end{theorem}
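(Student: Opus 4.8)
The plan is to prove the two implications separately, taking as the working definition of convergence in distribution the statement that $\int f\, d\mu_b \to \int f\, d\mu$ for every bounded continuous function $f:\R^{k|V|}\to\R$ (equivalently, convergence of the joint CDFs at the continuity points of the limit CDF, as in \cite{billingsley}). Throughout write $m = k|V|$.

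For the ``only if'' direction, assume $\vec Z_b \cond \vec Z$ and first establish the intermediate closed/open set bounds $\limsup_b \mu_b(F) \le \mu(F)$ for every closed $F\subset\R^m$ and $\liminf_b \mu_b(G) \ge \mu(G)$ for every open $G$. To obtain the first, for $\delta>0$ let $f_\delta(x) = \max\{0,\, 1 - \delta^{-1}\mathrm{dist}(x,F)\}$, which is bounded, Lipschitz (hence continuous), equals $1$ on $F$, and decreases pointwise to $\one_F$ as $\delta\downarrow 0$ because $F$ is closed; then $\limsup_b \mu_b(F) \le \limsup_b \int f_\delta\,d\mu_b = \int f_\delta\,d\mu$, and letting $\delta\downarrow 0$ with monotone convergence gives $\limsup_b \mu_b(F)\le\mu(F)$. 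The open-set bound follows by complementation. Now if $A$ is a $\mu$-continuity set, then $\mu(\partial A)=0$ forces $\mu(\mathrm{int}\,A) = \mu(A) = \mu(\bar A)$, so
\aln{
\mu(A) = \mu(\mathrm{int}\,A) & \le \liminf_b \mu_b(\mathrm{int}\,A) \le \liminf_b \mu_b(A) \\
& \le \limsup_b \mu_b(A) \le \limsup_b \mu_b(\bar A) \le \mu(\bar A) = \mu(A),
}
and the whole chain collapses to equalities, giving $\mu_b(A)\to\mu(A)$.

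For the ``if'' direction, assume $\mu_b(A)\to\mu(A)$ for every $\mu$-continuity set $A$, and aim to recover $\int f\, d\mu_b \to \int f\, d\mu$ for bounded continuous $f$; by adding a constant and rescaling it suffices to treat $0\le f\le 1$. Apply the layer-cake identity $\int f\,d\nu = \int_0^1 \nu(\{f>t\})\,dt$, valid for any probability measure $\nu$. Each super-level set $\{f>t\}$ is open, and $\partial\{f>t\}\subseteq\{f=t\}$; since the sets $\{f=t\}$, $t\in(0,1)$, are pairwise disjoint, at most countably many of them can have positive $\mu$-measure, so $\{f>t\}$ is a $\mu$-continuity set for Lebesgue-almost every $t\in(0,1)$. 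Hence $\mu_b(\{f>t\})\to\mu(\{f>t\})$ for a.e.\ $t$, and the bounded convergence theorem (the integrand is dominated by $1$ on $[0,1]$) yields $\int f\,d\mu_b = \int_0^1 \mu_b(\{f>t\})\,dt \to \int_0^1 \mu(\{f>t\})\,dt = \int f\,d\mu$. Linearity extends this to all bounded continuous $f$, which is precisely $\vec Z_b \cond \vec Z$.

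The Lipschitz approximation and the layer-cake bookkeeping are routine; the step I expect to require the most care, and the conceptual crux of the theorem, is the ``if'' direction, since the hypothesis supplies convergence only on continuity sets and one must manufacture a sufficiently rich family of them (almost every super-level set of an arbitrary bounded continuous function) in order to reconstruct convergence of integrals and so match the definition of convergence in distribution.
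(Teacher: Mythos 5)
Your proof is correct, but there is no ``paper's own proof'' to compare it against: the paper states this as a classical result and cites it directly from Billingsley's \emph{Convergence of Probability Measures} without reproducing an argument, so it is used as a black box in the proof of Lemma~\ref{mucontlem}. What you have written is the standard textbook proof of the continuity-set characterization within the portmanteau family: for the ``only if'' direction, the Lipschitz approximation $f_\delta(x)=\max\{0,1-\delta^{-1}\mathrm{dist}(x,F)\}$ correctly yields the closed-set upper bound and, by complementation, the open-set lower bound, after which the sandwich $\mu(A^\circ)=\mu(A)=\mu(\overline A)$ for a $\mu$-continuity set collapses the chain; for the ``if'' direction, the layer-cake identity together with the observation that $\partial\{f>t\}\subseteq\{f=t\}$, and that at most countably many of the disjoint level sets $\{f=t\}$ can carry positive $\mu$-mass, gives $\mu_b(\{f>t\})\to\mu(\{f>t\})$ for Lebesgue-a.e.\ $t\in(0,1)$, and bounded convergence finishes. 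One small remark: when you invoke ``monotone convergence'' for $\int f_\delta\,d\mu\downarrow\mu(F)$ as $\delta\downarrow 0$, you are applying it to a decreasing sequence, which is legitimate here only because $f_\delta\le 1$ and $\mu$ is a probability measure, so the first term is integrable; it is cleaner to just cite dominated convergence. That aside, the argument is complete and matches the result the paper relies on.
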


Let $\mu$ be the probability measure on $\R^{k |V|}$ associated to $\vec Z$.
Then, if we show that $A_{\bf w}$ is a $\mu$-continuity set for each choice of messages $\bf w$, from Theorem \ref{portmanteau}, the fact that ${\bf Z}_b \stackrel{d}{\to} \bf Z$ will imply that 
\al{ \label{limAw}
\lim_{b \to \infty} \Pr \left[ \vec Z_b \in A_{\bf w} \right] = \Pr \left[ \vec Z \in A_{\bf w} \right]
}
for each $\bf w$, and from (\ref{epbk}) and (\ref{epbk2}) we will conclude that $\ep_{k,b} \to \ep_k$ as $b \to \infty$.
This is in fact what we do in the following Lemma.

\begin{lemma} \label{mucontlem}
Suppose we have a coding scheme $\C$ with block length $k$, rate tuple $\vec R$, and finite reading precision $\rho$.
Then, for any choice of messages ${\bf w} \in \prod_{i=1}^{L} \{1,...,2^{k R_i}\}$, the error set $A_{\bf w}$ is a $\mu$-continuity set.
\end{lemma}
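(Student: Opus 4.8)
The plan is to exploit finite reading precision to show that the topological boundary $\partial A_{\bf w}$ has Lebesgue measure zero; since the limiting law $\mu$ is the product of the Gaussians $\N(0,\sigma_v^2)$ on $\R^{k|V|}$, hence absolutely continuous with respect to Lebesgue measure, this immediately gives $\mu(\partial A_{\bf w}) = 0$, which is the definition of a $\mu$-continuity set. (If some $\sigma_v = 0$ the corresponding noise coordinate is deterministically $0$ and the whole argument runs verbatim on the subspace carrying the nondegenerate coordinates, so we may assume every $\sigma_v > 0$.)

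First I would make precise how the error event depends on the noise. Fix the messages ${\bf w} \in \prod_{i=1}^{L}\{1,\dots,2^{kR_i}\}$. A noise realization is a vector $\vec z = (z_v[t])_{v \in V,\, 0\le t\le k-1} \in \R^{k|V|}$. Running $\C$ on $\vec z$ determines, causally in $t$, the transmit signals $X_u[t]$ — equal to a codeword entry $f_i(w_i)_t$ when $u=s_i$ is a source, and to $r_u^{(t)}(\lfloor Y_u[0]\rfloor_\rho,\dots,\lfloor Y_u[t-1]\rfloor_\rho)$ when $u$ is a relay (by finite reading precision) — and the received signals $Y_v[t] = \sum_{u\in\I(v)} h_{u,v} X_u[t] + z_v[t]$. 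Let $\Phi(\vec z) = (\lfloor Y_v[t]\rfloor_\rho)_{v\in V,\,0\le t\le k-1}$, a point of the countable set $(2^{-\rho}\Z)^{k|V|}$. Since the decoding functions $g_i$ also have finite reading precision, whether $\C$ errs is a deterministic function of $\Phi(\vec z)$ alone; hence there is a subset $S \subseteq (2^{-\rho}\Z)^{k|V|}$ with $A_{\bf w} = \Phi^{-1}(S) = \bigcup_{q\in S}\Phi^{-1}(\{q\})$.

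The heart of the argument is a structural claim proven by induction on $t$ (with the convention $G_{-1}(q) = \R^{k|V|}$): for every $q \in (2^{-\rho}\Z)^{k|V|}$ the set $G_t(q) := \{\vec z : \lfloor Y_v[t']\rfloor_\rho = q_v[t']\text{ for all } v\in V,\ 0\le t'\le t\}$ is a nonempty half-open box $\prod_{v\in V,\, t'\le t}[\alpha_{v,t'}(q),\,\alpha_{v,t'}(q)+2^{-\rho})$ in the coordinates $\{z_v[t'] : t'\le t\}$, whose left endpoint $\alpha_{v,t'}(q)$ depends on $q$ only through the coordinates $q_u[t'']$ with $t'' < t'$. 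Indeed, on $G_{t-1}(q)$ every $X_u[t]$ is a constant — a source symbol $f_i(w_i)_t$, or $r_u^{(t)}(q_u[0],\dots,q_u[t-1])$ by finite reading precision — so $Y_v[t] = \beta_{v,t}(q) + z_v[t]$ for a constant $\beta_{v,t}(q)$ independent of the $z_u[t']$ with $t'\le t-1$; thus $\lfloor Y_v[t]\rfloor_\rho = q_v[t]$ is equivalent to $z_v[t] \in [q_v[t]-\beta_{v,t}(q),\, q_v[t]-\beta_{v,t}(q)+2^{-\rho})$, and intersecting with $G_{t-1}(q)$ yields the claimed box for $G_t(q)$. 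Taking $t = k-1$, each $\Phi^{-1}(\{q\}) = G_{k-1}(q)$ is a half-open box of side $2^{-\rho}$ in $\R^{k|V|}$, and as $q$ ranges over $(2^{-\rho}\Z)^{k|V|}$ these boxes partition $\R^{k|V|}$.

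Finally I would close with a soft topological step. Put $\Gamma = \bigcup_{q}\partial G_{k-1}(q)$, a countable union over $(2^{-\rho}\Z)^{k|V|}$. If $x\notin\Gamma$, then $x$ lies in the unique box $G_{k-1}(q_0)$ containing it and, not being on its boundary, in its interior; a neighbourhood of $x$ is then contained in $G_{k-1}(q_0)$, hence in $A_{\bf w}$ if $q_0\in S$ and in its complement if $q_0\notin S$, so $x\notin\partial A_{\bf w}$. Therefore $\partial A_{\bf w}\subseteq\Gamma$. Each $\partial G_{k-1}(q)$ is a finite union of pieces of coordinate hyperplanes and so has Lebesgue measure $0$; hence $\Gamma$, a countable union of null sets, is Lebesgue-null, and since $\mu$ has a density with respect to Lebesgue measure, $\mu(\partial A_{\bf w})\le\mu(\Gamma)=0$. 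I expect the only genuine work to be the inductive structural claim — tracking the causal recursion for $X_u[t]$ and $Y_v[t]$ and using finite reading precision at both the relays and the decoders to collapse their functions to functions of the discretized received signals; once that box-partition structure is in hand, concluding that $A_{\bf w}$ is a $\mu$-continuity set is routine.
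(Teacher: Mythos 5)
Your proof is correct and shares the same starting point as the paper's — fix the messages, partition the noise space $\R^{k|V|}$ into the preimages $Q(\vec y) = \Phi^{-1}(q)$ of quantized received-signal vectors, note that $A_{\bf w}$ is a union of these pieces, and reduce to showing each piece has a $\mu$-negligible boundary. Where you diverge is in the structural claim about the pieces: the paper proves only that each $Q(\vec y)$ is \emph{convex} (via a coordinate-by-coordinate interpolation argument, Lemma \ref{qyconvex}) and then invokes a separate general lemma — using the Supporting Hyperplane Theorem together with Lebesgue's Density Theorem — to conclude that any convex set has Lebesgue-null boundary. You instead prove the sharper fact, by the causal induction on $t$, that each $Q(\vec y)$ is an axis-aligned half-open box of side $2^{-\rho}$, whose boundary is a finite union of pieces of coordinate hyperplanes and therefore trivially Lebesgue-null; no density theorem or hyperplane separation is needed. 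Your closing topological step is also cleaner than the paper's: you observe directly that $\partial A_{\bf w}$ is contained in the countable union $\Gamma$ of box boundaries, whereas the paper reaches $\mu(\partial A_{\bf w})=0$ through a chain of inequalities comparing $\mu(A_{\bf w}^\circ)$ with $\mu(\overline{A_{\bf w}})$. In short, your argument buys a more elementary and self-contained proof by establishing a stronger structural property of the partition; the paper's convexity route is slightly more robust in spirit (it would survive if the pieces were convex but not boxes) but at the cost of heavier machinery. Both are valid, and both conclude via absolute continuity of $\mu$.
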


\begin{proof}
Fix some choice of messages $\vec w$.
We will use the fact that $\C$ has finite reading precision $\rho$ to show that our set $A_{\bf w}$ and its complement $A_{\bf w}^c = \R^{k|V|} \setminus A_{\bf w}$ can be represented as a countable union of disjoint convex sets, which will then imply the $\mu$-continuity.
Recall from Definition \ref{precisiondef} that, in a coding scheme with finite reading precision $\rho$, a node $v$ only has access to $\lfloor Y_v \rfloor_\rho$.
Thus, we will call $\lfloor Y_v \rfloor_\rho$ the effective received signal at $v$.
The set
\aln{
\mathcal{Y} = \left\{ (y_1,...,y_{k|V|}) \in \R^{k|V|} \st y_i = \lfloor y_i \rfloor_\rho, i=1,...,k|V| \right\}
}
can be understood as the set of all possible values of the effective received signals at all nodes in $V$ during a length-$k$ block.
It is clear that $\mathcal{Y}$ is a countable set for any finite $\rho$.


Notice that, for our fixed choice of messages $\bf w$, the vector $\vec y \in \mathcal Y$ corresponding to the effective received signals at all nodes during the length-$k$ block is a deterministic function of the value of all the noises in the network during the length-$k$ block, $\vec z \in \R^{k|V|}$.
Therefore, for each $\vec y \in \mathcal Y$, we define $Q(\vec y) \subset \R^{k|V|}$ to be the set of noise realizations $\vec z$ that will result in $\vec y$ being the effective received signals.
In Lemma \ref{qyconvex} in the Appendix, we prove that $Q(\vec y)$ is a convex set.
We also prove that, for any convex set $S$, $\lambda(\partial S) = 0$, where $\lambda$ is the Lebesgue measure.
Since our measure $\mu$ is absolutely continuous (as $\bf Z$ is jointly Gaussian), it follows by definition \cite{billingsley} that
\aln{
\lambda(S) = 0 \Rightarrow \mu(S) = 0,
}
for any Borel set $S$.
Thus, since $\lambda(\partial Q(\vec y)) = 0$, we have that $\mu(\partial Q(\vec y)) = 0$. 
This, in turn, clearly implies that
\al{
\mu\left(Q(\vec y)^\circ\right)  = \mu\left(\overline{Q(\vec y)}\right) = \mu\left(Q(\vec y)\right), \label{qcircq}
}
where we use $S^\circ$ to represent the interior of a set $S$ and $\overline S$ to represent its closure.
Next, let $\mathcal Y_{A_{\bf w}} = \left\{ \vec y \in \mathcal Y \st   A_{\bf w} \cap Q(\vec y) \ne \emptyset \right\}$.
Notice that all noise realizations $\vec z \in Q(\vec y)$ will cause all nodes and, in particular, the destination nodes to receive the exact same effective signals.
Therefore, it must be the case that, if $A_{\bf w} \cap Q(\vec y) \ne \emptyset$, then $Q(\vec y) \subset A_{\bf w}$, which implies that
\aln{
\bigcup_{\vec y \in \mathcal Y_{A_{\bf w}}} Q(\vec y) = A_{\bf w}.
}
Moreover, it is obvious that any noise realization must belong to exactly one set $Q(\vec y)$, and 
we have
\aln{
\bigcup_{\vec y \in \mathcal Y \setminus \mathcal Y_{A_{\bf w}}} Q(\vec y) = A_{\bf w}^c.
}
Finally, we obtain
\aln{ \rescnt
\mu\left(A_{\bf w}^\circ\right) & \geqnum \mu\left(\bigcup_{\vec y \in \mathcal Y_{A_{\bf w}}} Q(\vec y)^\circ \right) \\ 
& \eqnum \sum_{\vec y \in \mathcal Y_{A_{\bf w}}} \mu\left(Q(\vec y)^\circ \right) \eqnum \sum_{\vec y \in \mathcal Y_{A_{\bf w}}} \mu\left({Q(\vec y)} \right) \\
& = 1 - \sum_{\vec y \in \mathcal Y \setminus \mathcal Y_{A_{\bf w}}} \mu\left({Q(\vec y)} \right) = 1 - \sum_{\vec y \in \mathcal Y \setminus \mathcal Y_{A_{\bf w}}} \mu\left({Q(\vec y)^\circ} \right) \\
& = 1- \mu\left(\bigcup_{\vec y \in \mathcal Y \setminus \mathcal Y_{A_{\bf w}}} {Q(\vec y)^\circ} \right) \geq 1 - \mu\left( \left( A_{\bf w}^c\right)^\circ \right) \\ 
& = \mu\left( \left(  \left( A_{\bf w}^c\right)^\circ \right)^c \right) = \mu\left(\overline{A_{\bf w}}\right),
} \rescnt
where \cnt follows since, for sets $B_1,B_2,...$, $\left( \cup_i B_i \right)^\circ \supseteq \cup_i B_i^\circ$,
\cnt follows from the countability of $\mathcal Y_{A_{\bf w}}$ and the fact that $Q(\vec y_1) \cap Q(\vec y_2) = \emptyset$ for $\vec y_1 \ne \vec y_2$, and \cnt follows from (\ref{qcircq}).
We conclude that $\mu(\partial A_{\bf w}) = \mu\left(\overline{A_{\bf w}}\right) - \mu\left(A_{\bf w}^\circ\right)  =0$; i.e., $A_{\bf w}$ is a $\mu$-continuity set.
%
%
%
\end{proof}


From our previous discussion, we conclude that $\ep_{k,b} \to \ep_k$ as $b \to \infty$.
We then see that we can apply code $\C_k$ within each of the $b$ blocks of length $k$ and obtain a probability of error (within that block) that tends to $\ep_k$ as $b \to \infty$.
However, since we have a total of $b$ blocks of length $k$, we make an error if we make an error in any of the $b$ blocks of length $k$.
It turns out that a simple union bound does not work here, since the error probability would be of the form $b \ep_{k,b}$ and we would not be able to guarantee that it tends to $0$ as $b$ and $k$ go to infinity.
Instead we consider using an outer code for each source-destination pair.

%

The idea is to apply coding scheme $\C_k$ to each of the $b$ length-$k$ blocks, and then view this as creating a discrete channel for each source-destination pair.
More specifically, for each length-$bk$ block, source $s_j$ chooses a \emph{symbol} (rather than a message) from $\{1,...,2^{k R_j}\}^b$ and transmits the $b$ corresponding codewords from $\C_k$.
Then destination $d_j$ will apply the decoder from code $\C_k$ inside each length-$k$ block and obtain an output symbol also from $\{1,...,2^{k R_j}\}^b$.
Notice that, by viewing the input to $bk$ network uses as a single input to this discrete channel, we make sure we have a discrete \emph{memoryless} channel, and we can use the Channel Coding Theorem.
We can view $W_j^b$ and $\hat W_j^b$ as the discrete input and output of the channel between $s_j$ and $d_j$.
We will then construct a code (whose rate is to be determined) for this discrete channel between $s_j$ and $d_j$ by picking each entry uniformly at random from $\{1,...,2^{k R_j}\}^b$.
Then, source-destination pair $(s_j,d_j)$ can achieve rate
\aln{ \rescnt
\frac{1}{bk}I(W_j^b;\hat W_j^b) &= \frac{1}{bk}\left( H(W_j^b) - H(W_j^b | \hat W_j^b) \right)\\
& \geq R_j - \frac{1}{bk} \sum_{\ell=0}^{b-1} H(W_j[\ell] | \hat W_j[\ell]) \\
& \geqnum R_j - \frac1k (1+ \ep_{k,b}^{(\ell)} k R_j) \\
& \geq R_j - \frac1k (1+ \ep_{k,b} k R_j) \\
& = R_j (1- \ep_{k,b}) - \frac1k,
} \rescnt
where \cnt follows from Fano's Inequality, since, within the $\ell$th length-$k$ block, we are applying code $\C_k$ and we have an average error probability of at most $\ep_{k,b}^{(\ell)}$ (it should in fact be less than $\ep_{k,b}^{(\ell)}$ since we are only considering the error event $W_j[\ell] \ne \hat W_j[\ell]$ and $\ep_{k,b}^{(\ell)}$ refers to the union of these events for all source-destination pairs).

We conclude that, by choosing $b$ and $k$ sufficiently large, it is possible for each source-destination pair to achieve arbitrarily close to rate $R_j$. 
Thus, our coding scheme can achieve arbitrarily close to the rate tuple $\vec R$.
This concludes the proof of Theorem \ref{finitelem1}.

\end{subsection}

\begin{subsection}{Optimality of Coding Schemes with Finite Reading Precision} \label{finitesec}

In this Section, we prove Theorem \ref{finitelem2}.
This theorem implies that, if we restrict ourselves to coding schemes with finite reading precision, and allow the reading precision to tend to infinity along the sequence of coding schemes, we can achieve any point in the capacity region of an AWGN wireless network, thus characterizing the optimality of coding schemes with finite reading precision for AWGN networks.
We start by considering a sequence of coding schemes $\C_n$ (with infinite reading precision) that achieves rate tuple $\vec R$ on an AWGN $L$-unicast wireless network.
We will build a sequence of coding schemes $\C_n^\star$ with finite reading precision that also achieves rate tuple $\vec R$ on the same $L$-unicast wireless network.


Let $\ep_n$ be the error probability of coding scheme $\C_n$, which achieves rate tuple $\vec R$ on the AWGN $L$-unicast wireless network. 
From Definition \ref{achievedef}, we have that $\ep_n \to 0$ as $n \to \infty$.
For any fixed $n$, we will first build a sequence of coding schemes with finite reading precision $\C^\star_{n,m}$, $m=1,2,...$, such that code $\C^\star_{n,m}$ has error probability $\ep_{n,m}$, where $\ep_{n,m} \to \ep_n$ as $m \to \infty$.
This will allow us to choose a finite $m$ for which $\ep_{n,m}$ is arbitrarily close to $\ep_n$.

Notice that, from Definition \ref{codedef}, relaying and decoding functions should be deterministic.
However, in order to construct coding scheme $\C^\star_{n,m}$, we will first assume that the relaying and decoding functions are allowed to be randomized, and later we will derandomize the constructed coding scheme.
Recall that, from Definition \ref{codedef}, coding scheme $\C_n$ is comprised of encoding functions $\left\{f_i : 1 \leq i \leq L \right\}$, relaying functions $\left\{r_v^{(t)} : v \in V, 1 \leq t \leq n \right\}$ and decoding functions $\left\{g_i : 1 \leq i \leq L \right\}$.
We will build $\C^\star_{n,m}$ from $\C_n$ by using the same encoding functions $f_i$, $i=1,...,L$, and replacing the relaying functions with
\aln{
\tilde r_v^{(t)} \left( Y_v[1],..., Y_v[t-1] \right) \defi  r_v^{(t)} \left( \tilde Y^{(m)}_v[1],..., \tilde Y^{(m)}_v[t-1] \right) 
}
for $1 \leq t \leq n$ and $v \in V$, and replacing the decoding functions with
\aln{
\tilde g_i \left( Y_v[1],..., Y_v[n] \right) \defi  g_i \left( \tilde Y^{(m)}_v[1],..., \tilde Y^{(m)}_v[n] \right),
}
for $1 \leq i \leq L$, where we define
\al{ \label{tildey}
\tilde Y^{(m)}_v[t] = \lfloor Y_v[t] \rfloor_m + U_v^{(m)}[t], 
}
for $v \in V$ and $1 \leq t \leq n$, where $U_v^{(m)}[1],...,U_v^{(m)}[n]$ are independent uniform random variables drawn from $\left(-2^{-m-1},2^{-m-1}\right)$, 
independent from all signals and noises in the network.
Notice that, since the relaying functions $r_v^{(t)}$ satisfy the power constraint in Definition \ref{codedef}, so will the new relaying functions $\tilde r_v^{(t)}$.
In order to relate the error probability of $\C_{n,m}^\star$ to the error probability of $\C_n$, we will need the following lemma, whose proof is in the Appendix.


\begin{lemma} \label{convdensity}
Suppose $Y$ is a random variable with density $f$.
Let ${\tilde Y}^{(m)} = \lfloor {Y} \rfloor_m + U^{(m)}$, where $U^{(m)}$ is uniformly distributed in $\left(-2^{-m-1},2^{-m-1}\right)$ and independent from $Y$.
Then each ${\tilde{Y}}^{(m)}$ has a density $f^{(m)}$, and $f^{(m)}$ converges pointwise almost everywhere to $f$.
\end{lemma}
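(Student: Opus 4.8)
The plan is to compute the density $f^{(m)}$ of $\tilde Y^{(m)}$ explicitly, recognize it as a local averaging of $f$ over intervals of mesh $2^{-m}$, and then obtain pointwise a.e.\ convergence from the Lebesgue differentiation theorem.

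First I would observe that $\lfloor Y\rfloor_m = 2^{-m}\lfloor 2^m Y\rfloor$ is a discrete random variable supported on the grid $2^{-m}\Z$, with $p_k \defi \Pr[\lfloor Y\rfloor_m = k2^{-m}] = \int_{k2^{-m}}^{(k+1)2^{-m}} f(y)\,dy$. Since $\tilde Y^{(m)}$ is the independent sum of this discrete variable and $U^{(m)}$, which has density $2^m\,\one\{|u|<2^{-m-1}\}$, the convolution formula shows $\tilde Y^{(m)}$ has a density; moreover, for every $x$ outside the countable set $\bigcup_m\{(k+\tfrac12)2^{-m}:k\in\Z\}$ exactly one grid atom contributes, so
\[
f^{(m)}(x) = 2^m\, p_{k_m(x)} = \frac{1}{|I_m(x)|}\int_{I_m(x)} f(y)\,dy,
\qquad k_m(x)\defi \big\lfloor 2^m x+\tfrac12\big\rfloor,
\]
where $I_m(x) \defi \big[\,k_m(x)2^{-m},\,(k_m(x)+1)2^{-m}\,\big)$ has length $2^{-m}$.

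The key structural point is that, although $x$ itself need not lie in $I_m(x)$, one always has $I_m(x)\subseteq B(x,2^{1-m})$ while $|I_m(x)| = 2^{-m} = \tfrac14|B(x,2^{1-m})|$; i.e., the intervals $I_m(x)$ shrink nicely to $x$ (with density ratio $\tfrac14$) as $m\to\infty$. Hence, by the Lebesgue differentiation theorem in the form allowing the averaging sets to shrink nicely to the point rather than be centered at it, the averages $\tfrac{1}{|I_m(x)|}\int_{I_m(x)} f$ converge to $f(x)$ at every Lebesgue point of $f$. Since $f\in L^1(\R)$, almost every $x$ is a Lebesgue point (and lies outside the countable exceptional set above), so $f^{(m)}(x)\to f(x)$ for a.e.\ $x$, as claimed.

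I expect the only delicate step to be the nicely-shrinking bookkeeping: one has to split into the cases where $x$ falls in the left or the right half of the dyadic cell determining $k_m(x)$ and check in each that $I_m(x)$ sits inside a ball of radius $2^{1-m}$ about $x$ with the claimed fixed density ratio, precisely because $x$ may lie just outside $I_m(x)$. Everything else is routine: the convolution computation for the density, and the appeal to the Lebesgue differentiation theorem. If one wishes to sidestep the nicely-shrinking version, the alternative is to write $f^{(m)}(x)-f(x) = \tfrac{1}{|I_m(x)|}\int_{I_m(x)}(f(y)-f(x))\,dy$ and bound its absolute value by $4\cdot\tfrac{1}{|B(x,2^{1-m})|}\int_{B(x,2^{1-m})}|f(y)-f(x)|\,dy$, which tends to $0$ at every Lebesgue point of $f$.
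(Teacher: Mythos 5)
Your proof is correct, and the setup (computing $f^{(m)}$ as an interval average of $f$ over a dyadic cell of length $2^{-m}$ that nearly, but not necessarily, contains $x$) is the same as the paper's; after equation~(\ref{integral}) the paper arrives at exactly the quotient $q_m=\frac{F(a_m+2^{-m})-F(a_m)}{2^{-m}}$ with $a_m\in[y-2^{-(m+1)},\,y+2^{-(m+1)})$, which is your $\frac{1}{|I_m(y)|}\int_{I_m(y)}f$ a.e. Where the two arguments diverge is in how this average is shown to converge: you invoke the Lebesgue differentiation theorem in its ``sets shrinking nicely to $x$'' form (citing the uniform density ratio $|I_m(x)|/|B(x,2^{1-m})|=\tfrac14$), so the convergence holds at every Lebesgue point of $f$; the paper instead works at the a.e.\ set where $F$ is differentiable with $F'(y)=f(y)$ and proves the convergence of $q_m$ from scratch via a subsequence-and-contradiction argument, partitioning the subsequence indices according to where $a_{m_i}$ sits inside the window $[y-2^{-(m_i+1)},y+2^{-(m_i+1)})$ and using monotonicity of $F$ to trap the $\limsup$. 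Your route is shorter and leans on a textbook theorem; the paper's is self-contained and avoids the nicely-shrinking refinement of Lebesgue differentiation, at the cost of a longer elementary argument. (In fact, at a point of differentiability of $F$ the conclusion also follows in two lines from the first-order expansion $F(y+h)=F(y)+f(y)h+o(h)$ applied at $h=a_m-y$ and $h=a_m+2^{-m}-y$, both of which are $O(2^{-m})$; the paper's contradiction argument is doing essentially this by hand.) Both proofs correctly yield pointwise a.e.\ convergence, since both the Lebesgue-point set and the differentiability set of $F$ have full measure.
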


This lemma will be used to show that, by picking $m$ sufficiently large, we can make the error probability of code $\C^{\star}_{n,m}$ arbitrarily close to $\ep_n$.
Suppose we fix the message vector $\vec w  \in \prod_{i=1}^{L} \{1,...,2^{k R_i}\}$ and 
let $\vec Y$ be the random vector of length $n |V|$ corresponding to all the received signals at all nodes during the $n$ time steps in the block if code $\C_n$ is used.
More precisely, we write $\vec Y = \left( \vec Y[0], ..., \vec Y[n-1] \right)$, where $\vec Y[t] = (Y_{1}[t],...,Y_{|V|}[t])$ is the random vector of received signals at all $|V|$ nodes at time $t$, for $0 \leq t \leq n-1$.
The received signal at node $v$ at time $t$, $Y_v[t]$, is defined in (\ref{channelmodel}).
Notice that here we assume that the set of nodes $V$ can be written as $V = \{1,...,|V|\}$, in order to simplify some expressions.
We claim that the random vector $\vec Y$ conditioned on the choice of messages $\vec W = \vec w$ has a density.
To see this, we first 
notice that, 
conditioned on the received signals received up to time $t-1$, i.e., on $(\vec Y[0],...,\vec Y[t-1]) = (\vec y[0],...,\vec y[t-1])$, and on $\vec W = \vec w$, the transmit signals at time $t$, $X_v[t]$ for $v \in V$, are all deterministic.
Thus, the received signals $Y_v[t]$, for $v \in V$, are conditionally independent and each one is normally-distributed, conditioned on $(\vec Y[0],...,\vec Y[t-1]) = (\vec y[0],...,\vec y[t-1])$ and $\vec W = \vec w$.
Therefore, the conditional pdf $f_{Y_v[t]|\vec Y[0],...,\vec Y[t-1],\vec W}(y_v[t]|\vec y[0],..., \vec y[t-1], \vec w)$ exists for each $v \in V$.
We conclude that, conditioned on $\vec W = \vec w$, the random vector $\vec Y$ has a density given by
\al{ \label{gprod} \rescnt
f_{\vec Y|\vec W} (\vec y| \vec w) & = \prod_{v=1}^{|V|} f_{Y_{v}[0]|\vec W}\left(y_{v}[0]|\vec w\right) \prod_{t=1}^{n-1} \prod_{v =1}^{|V|} f_{\left. Y_{v}[t] \right| \vec Y[0],..., \vec Y[t-1],\vec W} \left( \left. y_{v}[t] \right| \vec y[0],...,\vec y[t-1],\vec w \right).
} \rescnt
Similarly, we let $\vec {\tilde Y}^{(m)}$ be the vector of $n |V|$ effective received signals (\ref{tildey}) if 
code $\C^\star_{n,m}$ is used instead, i.e.,  $\vec {\tilde Y}^{(m)} = \left( \vec {\tilde Y}^{(m)}[0], ..., \vec {\tilde Y}^{(m)}[n-1] \right)$, where $\vec {\tilde Y}[t] = \left({\tilde Y}^{(m)}_{1}[t],...,{\tilde Y}^{(m)}_{|V|}[t]\right)$.
By using similar arguments to those that led to (\ref{gprod}), we see that, when we condition on $(\vec {\tilde Y}^{(m)}[0],...,\vec {\tilde Y}^{(m)}[t-1]) = (\vec y[0],...,\vec y[t-1])$, and on $\vec W = \vec w$, the effective received signals ${\tilde Y}^{(m)}_v[t]$, for $v \in V$, are conditionally independent (although not normally-distributed).
Then, using the fact that, 
from (\ref{tildey}), ${\tilde Y}^{(m)}_v[t]$ is the sum of two independent random variables 
and $U_v^{(m)}[t]$ has a density (see page 266 in \cite{billingsley}), we conclude that, conditioned on $\vec W$, $\vec {\tilde Y}^{(m)}[t]$ has a conditional density given by
\al{ \label{gmprod} \rescnt
f_{\vec {\tilde Y}^{(m)}|\vec W} (\vec y| \vec w)  = & \prod_{v=1}^{|V|} f_{{\tilde Y}^{(m)}_{v}[0]|\vec W}\left(y_{v}[0]|\vec w\right) \nonumber \\ 
& \prod_{t=1}^{n-1} \prod_{v =1}^{|V|} f_{{\tilde Y}^{(m)}_{v}[t] | \vec {\tilde Y}^{(m)}[0],..., \vec {\tilde Y}^{(m)}[t-1],\vec W} \left( \left. y_{v}[t] \right| \vec y[0],...,\vec y[t-1],\vec w \right).
} \rescnt
The random variables ${\tilde Y}^{(m)}_v = \lfloor Y_v[t] \rfloor_m + U_v^{(m)}[t]$, for $m=1,2,...$, conditioned on $(\vec {\tilde Y}^{(m)}[0],...,\vec {\tilde Y}^{(m)}[t-1]) = (\vec y[0],...,\vec y[t-1])$ and $\vec W = \vec w$, satisfy the conditions of Lemma \ref{convdensity}, and we have that
\aln{
& f_{{\tilde Y}^{(m)}_{v}[0]|\vec W}\left(y_{v}[0]|\vec w\right) \to f_{Y_{v}[0]|\vec W}\left(y_{v}[0]|\vec w\right) \quad \text{ and } \\
& f_{{\tilde Y}^{(m)}_{v}[t] | \vec {\tilde Y}^{(m)}[0],..., \vec {\tilde Y}^{(m)}[t-1],\vec W} \left( \left. y_{v}[t] \right| \vec y[0],...,\vec y[t-1],\vec w \right) \to \\ 
& \quad \quad \quad \quad \quad \quad f_{\left. Y_{v}[t] \right| \vec Y[0],..., \vec Y[t-1],\vec W} \left( \left. y_{v}[t] \right| \vec y[0],...,\vec y[t-1],\vec w \right),
}
as $m \to \infty$, for $t=2,...,n$ and $v \in V$, for almost all $\vec y \in \R^{n|V|}$.
%
Therefore, we conclude that $f_{\vec{\tilde Y}^{(m)}|\vec W} \left( \vec y |\vec w \right) \to f_{\vec Y|\vec W} (\vec y|\vec w)$ as $m\to \infty$ for almost all $\vec y \in \R^{n|V|}$ and any $\vec w\in \prod_{i=1}^{L} \{1,...,2^{k R_i}\}$.
%

Next we notice that, conditioned on the message vector $\vec W = \vec w$, whether we make an error or not is a function of the received signals at all nodes during the $n$ time steps (it is in fact only a function of the received signals at the destinations).
Thus, there exists a set $E_{\vec w} \subset \R^{n |V|}$ of received signals during the $n$ time steps which cause a decoding error (at any of the decoders).
We will let $\mu_{\vec w}^{(n)}$ be the probability measure on $\R^{n|V|}$ corresponding to $\vec Y$ (the received signals when using coding scheme $\C_n$) conditioned on $\vec W = \vec w$ and $\mu^{(m,n)}_{\vec w}$ be the probability measure on $\R^{n|V|}$ corresponding to $\vec {\tilde Y}^{(m)}$ (the effective received signals when we use coding scheme $\C^{\star}_{n,m}$) conditioned on $\vec W = \vec w$.
By Scheff\'e's Theorem \cite{billingsley}, we have that
\aln{
\sup_{A \in \B} \left|\mu_{\vec w}^{(n)}(A)-\mu^{(m,n)}_{\vec w}(A)\right| \leq \int_{\R^{n |V|}} \left|f_{\vec Y|\vec W}(\vec y|\vec w) - f_{\vec {\tilde Y}^{(m)}|\vec W}(\vec y|\vec w)\right| d \lambda \to 0, \text{ as $m \to \infty$},
}
where $\B$ is the Borel $\sigma$-field on $\R^{n |V|}$, and $\lambda$ is the Lebesgue measure.
This, in turn, implies that for any choice of messages $\vec w$, we must have $\lim_{m \to \infty} \mu^{(m,n)}_{\vec w} (E_{\vec w}) = \mu^{(n)}_{\vec w}(E_{\vec w})$.
We conclude that
\al{ \label{errorlim}
\ep_{n,m} & = 2^{-n \sum_{i=1}^{L} R_i} \sum_{\bf w} \Pr \left[ \left. \vec {\tilde Y}^{(m)} \in E_{\bf w} \right| \vec W = \vec w \right] \\ 
& = 2^{-n \sum_{i=1}^{L} R_i} \sum_{\bf w} \mu^{(m,n)}_{\vec w} \left(E_{\bf w} \right) \stackrel{m \to \infty}{\longrightarrow} 2^{-n \sum_{i=1}^{L} R_i} \sum_{\bf w} \mu^{(n)}_{\vec w} \left(E_{\bf w} \right) = \ep_n.
}
Therefore, we can choose, for each $n$, $m_n$ sufficiently large such that the probability of error of code $\C_{m_n,n}^{\star}$, $\ep_{m_n,n}$, is at most $2 \ep_n$.
Finally, we need to take care of the fact that $\C_{m_n,n}^{\star}$ uses randomized relaying and decoding functions.
First, we notice that if we let $\vec U_m$ be the random vector corresponding to the $n |V|$ samples from $(- 2^{-(m+1)},2^{-(m+1)})$ drawn at the $|V|$ nodes during $n$ time steps, then we can write
\aln{
\ep_{m_n,n} & = 2^{-n \sum_{i=1}^{L} R_i} \sum_{\bf w} \Pr \left[ \left. \vec {\tilde Y}^{(m_n)} \in E_{\bf w} \right| \vec W = \vec w \right] \nonumber \\
& = E \left[ 2^{-n \sum_{i=1}^{L} R_i} \sum_{\bf w} \Pr \left[ \left. \vec {\tilde Y}^{(m_n)} \in E_{\bf w} \right| \vec W = \vec w, \vec U_{m_n} \right] \right].
}
Therefore, there must exist some $\vec u \in \R^{n|V|}$ for which
\aln{
2^{-n \sum_{i=1}^{L} R_i} \sum_{\bf w} \Pr \left[ \left. \vec {\tilde Y}^{(m_n)} \in E_{\bf w} \right| \vec W = \vec w, \vec U_{m_n} = \vec u  \right] \leq \ep_{m_n,n}.
 } 
Thus, we define the coding scheme $\C_n^\star$ by having each node $v$ at time $t$ quantize its received signal with resolution $m_n$, add to it $u_{v}[t]$ (i.e., the entry of $\vec u$ corresponding to node $v$ and time $t$) and then apply the relaying/decoding function from code $\C_n$.
It is then clear that $\C_n^\star$ has deterministic relaying/decoding functions, and its error probability is at most $\ep_{m_n,n} \leq 2 \ep_n$.
%
Therefore, the sequence of codes $\C_n^\star$, $n=1,2,...,$ has finite reading precision and achieves the rate tuple $\vec R$. 
\end{subsection}

\end{section}

\begin{section}{Extension to General Traffic Demands} \label{extsec}

One immediate extension of the result in Theorem \ref{mainthm} is to consider wireless networks with general traffic demands.
These could include non-unicast flows such as multicast and broadcast flows.
We again consider an additive noise wireless network described by a directed graph $G=(V,E)$.
This time, we will assume that traffic demands are given by $\T(v,U) = 1$, for all $v \in V$ and $U \subseteq V$.
This way, every node has a message for every subset of the remaining nodes.

By proving the worst-case noise result for a wireless network with such traffic demands, the result is also proved for any other traffic demand $\T'$.
To see this, notice that, if $C_{\T} \subset \R^{V \times \Ps(V)}$ is the capacity region of a wireless network with traffic demand $\T(v,U) = 1$, for all $v \in V$ and $U \subseteq V$, then the capacity region of a wireless network with traffic demand $\T'$ can be written as
\aln{
C_{\T'} = \left\{ {\bf R} \in C_{\T} : R(v,U) = 0 \text{ if } \T'(v,U) = 0 \right\}.
}
Hence, if we prove that 
\aln{
C_{\T,{\rm AWGN}} \subseteq C_{\T,{\rm non\text{-}AWGN}},
}
we also prove that, for any traffic demand $\T'$,
\aln{
C_{\T',{\rm AWGN}} \subseteq C_{\T',{\rm non\text{-}AWGN}}.
}
We can now replace Definition \ref{codedef} with the following.

\begin{definition}
A coding scheme $\C$ with block length $n \in \IN$ and rate tuple $\vec R \in \R^{V \times \Ps(V)}$ for an additive noise wireless network consists of:
\begin{enumerate}[1. ]
\item Encoding/relaying functions $r_v^{(t)} : \R^{t-1} \times \prod_{D \in \Ps(V)} \{1,...,2^{n R(v,D)}\} \to \R$, for $t=0,...,n-1$, for each node $v \in V$, satisfying the average power constraint
\aln{
\frac 1n \sum_{t=1}^n \left[ r_v^{(t)}(y_1,...,y_{t-1},\vec w_v)\right]^2 \leq P,
}
for all $(y_1,...y_{t-1}) \in \R^{t-1}$ and $\vec w_v \in \prod_{D \in \Ps(V)} \{1,...,2^{n R(v,D)}\}$.
\item A decoding function $g_u : \R^n \to \prod_{\substack{v\in V,\\ D \in \Ps(V): u \in D}} \{1,...,2^{n R(v,D)}\}$ for each node $u \in V$.
\end{enumerate}
\end{definition}


With this definition of a coding scheme, it is straightforward to extend Definitions \ref{achievedef}, \ref{precisiondef} and \ref{precisionachievedef} to this setting.
We can then generalize Theorem \ref{mainthm} as stated in Theorem \ref{mainthm2}.


Theorem \ref{mainthm2} can be proved using essentially the same steps in the proof of Theorem \ref{mainthm}.
From the previous discussion, it suffices to prove this result for traffic demands given by $\T(v,U) = 1$, for all $v \in V$ and $U \subseteq V$.
To re-prove Theorem \ref{finitelem1} in this new setting, we start by applying the OFDM-like scheme to the transmit and received signals of every node exactly as done in Section \ref{ofdmsec}.
Thus, the convergence in distribution of the effective additive noise terms to Gaussian, proved in Section \ref{noisesec}, still holds.
Therefore, we may assume that, as in the beginning of Section \ref{outercodesec}, we have $k$ blocks of $b$ network uses each, and we apply the OFDM-like scheme inside each length-$b$ block.
Next, by interleaving the network uses, we obtain $b$ blocks of length $k$ inside which the network is approximately AWGN.
Furthermore, since we start off with a sequence of coding schemes $\C_k$ with finite reading precision, the proof of Lemma \ref{mucontlem} holds verbatim, except that $\vec w$, the vector of messages chosen, is now a vector in $\prod_{v\in V, D \in \Ps(V)} \{1,...,2^{k R(v,D)}\}$.
Thus, within each length-$k$ block, the probability that any node decodes any of its messages incorrectly (assuming all messages are chosen independently and uniformly at random) is upper bounded by $\ep_{k,b}$, where $\ep_{k,b} \to \ep_k$ as $b\to \infty$ and $\ep_k \to 0$ as $k \to \infty$.

In order to deal with the dependence between the noise realizations of different length-$k$ blocks, we will again consider employing outer codes.
This time, however, instead of having one outer code for each source-destination pair, we will have one outer code for each message $w(v,D)$ (i.e., one outer code for each $v \in V$ and $D \in \Ps(V)$).
Thus, for each $v \in V$ and $D \in \Ps(V)$, we will define a \emph{broadcast} discrete channel with input and output alphabet $\{1,...,2^{k R(v,D)}\}^b$, where $v$ is the source and all nodes in $D$ are the destinations, which are all interested in the same message.
We construct each code by sampling $\{1,...,2^{k R(v,D)}\}^b$ uniformly at random.
Let $W(v,d)^b$ correspond to a random symbol chosen by $v$ uniformly at random from $\{1,...,2^{k R(v,D)}\}^b$, and $\hat W(v,d)_u^b$ be the corresponding output symbol at each node $u \in D$.
For the outer code associated with $v$ and $D$, we can achieve rate
\aln{ \rescnt
\frac{1}{bk}\min_{u \in D} I\left(W(v,D)^b;\hat W(v,D)_u^b\right) &= \frac{1}{bk} \min_{u \in D} \left( H(W(v,D)^b) - H(W(v,D)^b | \hat W(v,D)_u^b) \right)\\
& \geq R(v,D) - \max_{u \in D} \frac{1}{bk} \sum_{\ell=0}^{b-1} H(W(v,D)[\ell] | \hat W(v,D)_u[\ell]) \\
& \geqnum R(v,D) - \max_{u \in D} \frac1k (1+ \ep_{k,b} k R(v,D)) \\
& = R(v,D) (1- \ep_{k,b}) - \frac1k,
} \rescnt
where \cnt follows from Fano's Inequality, since, within each length-$k$ block, we apply code $\C_k$ and we have an average error probability of at most $\ep_{k,b}$. 
Therefore, by choosing $b$ and $k$ sufficiently large, our constructed code achieves arbitrarily close to $\vec R$ on the non-Gaussian additive noise wireless network.

The proof of Theorem \ref{finitelem2} holds in this new setting almost verbatim.
The only difference is that we now have one rate for each source $s \in V$ and destination set $D \subset V$ and the message vector $\vec w$ has size $V \times \Ps(V)$; thus, the expressions for the error probability in (\ref{errorlim}) must be modified accordingly.
This concludes the proof of Theorem \ref{mainthm2}.

\end{section}

\begin{section}{Concluding Remarks} \label{conclusion}

In this work, we proved that the Gaussian noise is the worst-case noise in additive noise wireless networks.
This extends the classical result that Gaussian noise is the worst-case noise for point-to-point additive noise channels, which is commonly used as a justification for the modeling of the noise in wireless systems as Gaussian noise.
Thus, we provide formal evidence that this modeling is indeed justified beyond the point-to-point setting.

It is important to highlight the fact that we prove our result by actually constructing a coding scheme that performs well on a non-Gaussian network from a coding scheme designed to perform well on an AWGN network.
This is different from the mutual-information-based proof for point-to-point channels, described in Section \ref{intro}, which relies on the Channel Coding Theorem, and, thus, in random coding arguments. 

Another important point about the techniques we introduce is that the only information about the actual noise distributions required for the coding scheme construction are the mean and the variance.
This means that, given a wireless network with \emph{unknown} noise distributions where only the mean and variance can be measured, it is possible to construct a sequence of coding schemes that achieves the capacity of the corresponding AWGN network.

One simple extension of this work is to consider MIMO wireless networks; i.e., wireless networks where each node can have multiple antennas.
It is not difficult to see that the same arguments will hold in this case, and the Gaussian noise can also be seen to be worst-case.
But the tools we developed are in fact also useful for establishing several other worst-case results in different classes of problems.
In particular, the same DFT-based linear transformation followed by an interleaving procedure was used in \cite{wcsourceITW} in order to show that the Gaussian sources are worst-case data sources for distributed compression of correlated sources over rate-constrained, noiseless channels, with a quadratic distortion measure (i.e., in the context of the quadratic $k$-encoder source coding problem).
A similar approach was also taken in \cite{wcjournal}, where the authors consider the problem of communicating a distributed correlated memoryless source over a memoryless network, under quadratic distortion constraints. 
In this setting they show that, (a) for an arbitrary memoryless network, among all distributed memoryless sources with a particular correlation, Gaussian sources are the worst compressible, that is, they admit the smallest set of achievable distortion tuples, and (b) for any arbitrarily distributed memoryless source to be communicated over a memoryless additive noise network, among all noise processes with a fixed correlation, Gaussian noise admits the smallest achievable set of distortion tuples. 

We observe that establishing the worst-case noise for wireless networks can also be a useful tool in determining the relationship between the capacity regions of the same network under different channel models.
For example, in \cite{deterministickkk}, an additive \emph{uniform} noise network is used as a way to connect the capacity region of Gaussian networks with the capacity region of truncated deterministic networks (first introduced in \cite{ADTFullPaper}).
The worst-case noise result is used first to establish that the capacity region of a Gaussian network with noises distributed as $\N(0,1/12)$ is a subset of the capacity region of the same network with noises distributed uniformly in $\left(-\tfrac12,\tfrac12\right)$.
Then, by noticing that the uniform noise network can be emulated on a network with truncated deterministic channels, it is shown that  the capacity region of the truncated deterministic network where the nodes have slightly more power contains the capacity region of the corresponding Gaussian network.

Finally, we point out that the result in Theorem \ref{finitelem2} is interesting in itself, since it implies that the capacity region when we restrict ourselves to coding schemes with finite reading precision and allow the precision to tend do infinity along the sequence of coding schemes is equal to the unrestricted capacity.
In fact, it is not difficult to change the proof of the theorem in order to prove that $C^{(m)}$, the capacity region when we restrict ourselves to coding schemes where only $m$ bits after the decimal point are available, converges to the unrestricted capacity region $C$, as $m \to \infty$.
Since in any practical wireless system the analog received signals must go through an analog-to-digital converter, this result essentially implies that by increasing the resolution of the analog-to-digital converters used in a wireless network, the capacity region of the practical system is indeed approaching the capacity region of the usual infinite-precision models used in the study of wireless networks.

%

\end{section}

\section{Acknowledgements}

The authors would like to thank Professors Gennady Samorodnitsky and Aaron Wagner from Cornell University for helpful discussions.
We also thank the anonymous reviewers for many helpful comments and suggestions.


\appendix 

\section{Appendix} \label{convexapp}

\begin{lemma}
Let $\lambda$ denote the Lebesgue measure.
Then, for any convex set $S$, $\lambda(\partial S) = 0$.
\end{lemma}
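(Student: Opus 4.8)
The plan is to reduce to the non-degenerate case and then exploit the scaling structure of convex sets. First I would dispose of the case $S^\circ = \emptyset$: a convex subset of $\R^m$ with empty interior cannot contain $m+1$ affinely independent points (their convex hull would be a full-dimensional simplex, which has interior points), so its affine hull is a proper affine subspace of $\R^m$; since affine subspaces are closed and $\lambda$-null, both $\overline S$ and $\partial S\subseteq\overline S$ are $\lambda$-null and we are done. So from now on assume $S^\circ\neq\emptyset$, and, after a translation (which preserves $\lambda$ and $\partial S$), that $0\in S^\circ$, so that $B(0,r)\subseteq S$ for some $r>0$.

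The geometric heart of the argument is the inclusion $t\,\overline S\subseteq S^\circ$ for every $t\in[0,1)$. I would derive it from the standard convexity lemma: if $C$ is convex, $a\in C^\circ$, and $b\in\overline C$, then $(1-t)a+tb\in C^\circ$ for all $t\in[0,1)$. For the proof idea, pick $\varepsilon>0$ with $B(a,\varepsilon)\subseteq C$; for any $b'\in C$ with $\|b'-b\|$ small, convexity gives $B\big((1-t)a+tb',\,(1-t)\varepsilon\big)\subseteq C$, and taking $b'$ close enough to $b$ places a ball around $(1-t)a+tb$ inside $C$. Applying this with $a=0$ and $C=S$ yields $t\,\overline S\subseteq S^\circ$ for $t\in[0,1)$.

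Now $\partial S=\overline S\setminus S^\circ$ is disjoint from $t\,\overline S$, so $\partial S\subseteq\overline S\setminus t\,\overline S$. To accommodate a possibly unbounded $S$, I would argue one ball at a time. Fix $R\in\IN$ and $t\in[0,1)$. For $x\in\overline S\cap B(0,R)$ we have $tx\in t\,\overline S\subseteq S^\circ\subseteq\overline S$ and $\|tx\|<R$, so $t(\overline S\cap B(0,R))\subseteq\overline S\cap B(0,R)$; moreover $t(\overline S\cap B(0,R))\subseteq t\,\overline S$ is disjoint from $\partial S$. Hence $t(\overline S\cap B(0,R))$ and $\partial S\cap B(0,R)$ are disjoint measurable subsets of $\overline S\cap B(0,R)$, giving
\[
\lambda\big(\partial S\cap B(0,R)\big)\;\le\;\lambda\big(\overline S\cap B(0,R)\big)-\lambda\big(t(\overline S\cap B(0,R))\big)\;=\;(1-t^m)\,\lambda\big(\overline S\cap B(0,R)\big)\;\le\;(1-t^m)\,\lambda\big(B(0,R)\big),
\]
which is finite. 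Letting $t\uparrow 1$ forces $\lambda(\partial S\cap B(0,R))=0$, and since $\partial S=\bigcup_{R\in\IN}\big(\partial S\cap B(0,R)\big)$, countable subadditivity gives $\lambda(\partial S)=0$.

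I expect the only care needed is in two routine spots: checking that an interior-free convex set lies in a proper affine subspace, and carrying out the ball-by-ball truncation so that the measure estimate never degenerates into an $\infty-\infty$ situation (e.g.\ for a half-space). The scaling inclusion $t\,\overline S\subseteq S^\circ$ is the conceptual key, but it is classical convex geometry; no genuine obstacle is anticipated.
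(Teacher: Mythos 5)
Your proof is correct, but it takes a genuinely different route from the paper's. The paper argues pointwise via supporting hyperplanes: for each $p \in \partial S$ the Supporting Hyperplane Theorem places $\partial S$ inside a closed half-space through $p$, so the density of $\partial S$ at $p$ is at most $1/2$; Lebesgue's Density Theorem then forces $\lambda(\partial S) = 0$ because a measurable set of positive measure must have density $1$ at almost every one of its points. You instead use the global dilation structure: after handling the degenerate case and translating so $0 \in S^\circ$, the inclusion $t\,\overline S \subseteq S^\circ$ for $t \in [0,1)$ lets you sandwich $\partial S \cap B(0,R)$ into a measure gap of size $(1-t^m)\lambda(\overline S \cap B(0,R))$, which vanishes as $t \uparrow 1$; the ball-by-ball truncation correctly avoids the $\infty - \infty$ issue for unbounded $S$ such as a half-space. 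The paper's argument is shorter but invokes the Lebesgue Density Theorem; yours avoids that theorem entirely, relying only on elementary measure theory (scaling, monotonicity, countable subadditivity) plus a standard convexity lemma, at the cost of a separate treatment of the empty-interior case and the truncation bookkeeping. Both are classical and sound.
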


\begin{proof}
Consider any point $p \in \partial S$.
Clearly, $p \notin S^\circ$, and by the Supporting Hyperplane Theorem \cite{bertsekasconvex}, there exists a hyperplane that passes through $p$ and contains $S$ in one of its closed half-spaces.
Let $H$ be such a closed half-space.
Since $H$ is closed, it is clear that $\partial S \subset H$.
Then, for any closed ball $B_\ep(p)$ centered at $p$, it is clear that
\aln{
\frac{\lambda(B_\ep(p) \cap \partial S)}{\lambda(B_\ep(p))} 
\leq \frac{\lambda(B_\ep(p) \cap H)}{\lambda(B_\ep(p))} = 1/2.
}
By Lebesgue's Density Theorem, the set 
\aln{
P = \left\{ p \in \partial S \st \liminf_{\ep \to 0} \frac{\lambda(B_\ep(p)  \cap \partial S)}{\lambda(B_\ep(p))} < 1 \right\}
}
should have Lebesgue measure zero.
But since $P = \partial S$, we conclude that $\lambda\left(\partial S\right) = 0$. 
%
\end{proof}

\begin{lemma} \label{qyconvex}
In the proof of Lemma \ref{mucontlem}, for each $\vec y \in \mathcal Y$, $Q(\vec y)$ is a convex set.
\end{lemma}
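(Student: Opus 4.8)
The plan is to show that $Q(\vec y)$ — the set of noise realizations $\vec z \in \R^{k|V|}$ that produce a fixed vector of effective (quantized) received signals $\vec y \in \mathcal Y$ for a fixed message vector $\vec w$ — is cut out by finitely many linear inequalities, hence is a (convex) polyhedron. First I would unwind the definitions. Recall from the channel model (\ref{channelmodel}) that $Y_v[t] = \sum_{u \in \I(v)} h_{u,v} X_u[t] + N_v[t]$, and that in a coding scheme with finite reading precision $\rho$, the relaying function $r_v^{(t)}$ depends only on the quantized past received signals $\lfloor Y_v[\cdot]\rfloor_\rho$. Fixing $\vec w$, I would argue by induction on $t$ that, conditioned on the event that the effective received signals up to time $t-1$ equal a prescribed value in $\mathcal Y$, every transmit signal $X_v[t]$ is a deterministic constant (determined by $\vec w$ and those quantized past values through the encoding/relaying functions). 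This is the key structural fact and it is essentially already argued in Section \ref{finitesec}: given the quantized history, the relaying functions output fixed numbers.

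Next I would translate the condition ``$\lfloor Y_v[t]\rfloor_\rho = y_v[t]$'' into linear constraints on $\vec z$. Given that $X_u[t]$ are constants $c_{u}^{(t)}(\vec y_{<t})$ on the relevant region, we have $Y_v[t] = a_v^{(t)} + N_v[t]$ where $a_v^{(t)} = \sum_{u\in\I(v)} h_{u,v} c_u^{(t)}$ is a constant, and $N_v[t] = z_{v}[t]$ is the corresponding coordinate of $\vec z$. The requirement $\lfloor a_v^{(t)} + z_v[t]\rfloor_\rho = y_v[t]$ is exactly $y_v[t] \le a_v^{(t)} + z_v[t] < y_v[t] + 2^{-\rho}$, i.e. a pair of linear (half-open) inequalities on the single coordinate $z_v[t]$. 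Intersecting these over all $v \in V$ and all $t = 0,\dots,k-1$ — a finite collection — gives that $Q(\vec y)$ is a finite intersection of half-spaces (more precisely, a box, since each constraint involves only one coordinate, though convexity is all we need). A finite intersection of convex sets is convex, so $Q(\vec y)$ is convex, completing the proof.

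The one subtlety to handle carefully is the induction: I must make sure that the ``constant'' values $c_u^{(t)}$ are well-defined on all of $Q(\vec y)$, which is why I restrict attention to noise realizations consistent with the prescribed quantized history $\vec y$ at all earlier times — on that set the quantized history is literally the fixed vector $(\dots,y_v[t'],\dots)_{t'<t}$, so the relaying functions evaluate to fixed numbers. I expect this bookkeeping — making precise that conditioning on the full vector $\vec y \in \mathcal Y$ pins down all transmit signals simultaneously, so that the defining inequalities decouple coordinate-by-coordinate — to be the main (though routine) obstacle; the convexity conclusion itself is immediate once the linear description is in hand. I would also note in passing that the half-open nature of the intervals is harmless: a ``box'' with some open and some closed faces is still convex.
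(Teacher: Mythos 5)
Your proof is correct, and it rests on the same key structural observation the paper uses: because the coding scheme has finite reading precision $\rho$, for any $\vec z \in Q(\vec y)$ the quantized history at every node is pinned to $\vec y$ itself, so (by induction on $t$, with $\vec w$ fixed) the transmit signals, and hence the noiseless received signals $y_v[t]^\ast = \sum_{u \in \I(v)} h_{u,v} X_u[t]$, are deterministic constants determined by $(\vec w, \vec y)$ alone. Where you diverge is in how convexity is concluded. The paper never writes down the box; instead it takes $\vec z, \vec z' \in Q(\vec y)$ and $\alpha \in [0,1]$, replaces one coordinate of $\vec z$ at a time with the corresponding coordinate of $\alpha \vec z + (1-\alpha)\vec z'$, and uses monotonicity of $\lfloor \cdot \rfloor_\rho$ to sandwich $\left\lfloor y_v[\ell]^\ast + \alpha z_v[\ell] + (1-\alpha) z'_v[\ell] \right\rfloor_\rho$ between $\left\lfloor y_v[\ell]^\ast + z_v[\ell] \right\rfloor_\rho$ and $\left\lfloor y_v[\ell]^\ast + z'_v[\ell] \right\rfloor_\rho$, so each one-coordinate replacement keeps the point in $Q(\vec y)$; iterating over all coordinates closes the argument. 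You instead make the induction explicit and then exhibit $Q(\vec y)$ outright as the axis-aligned half-open box $\prod_{v,t}\bigl[y_v[t] - a_v^{(t)},\, y_v[t] - a_v^{(t)} + 2^{-\rho}\bigr)$, from which convexity is immediate. Both routes are valid and rely on the identical structural fact; yours is slightly more informative (it determines the exact shape of $Q(\vec y)$, which is stronger than bare convexity and would also trivially give $\lambda(\partial Q(\vec y)) = 0$ without invoking the general convex-set lemma), while the paper's componentwise convex-combination argument is a touch more compact since it leaves the induction implicit.
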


\begin{proof}
Consider two noise realizations $\vec z, \vec z' \in Q(\vec y)$ 
and fix some $\alpha \in [0,1]$.
We will show that if we replace one of the components of $\vec z$ with the corresponding component of $\alpha \vec z + (1-\alpha) \vec z'$, the resulting noise realization $\vec z''$ is still in $Q(\vec y)$.
Then, by using the same argument with $\vec z''$ instead of $\vec z$, another component of $\vec z''$ is replaced with a component $\alpha \vec z + (1-\alpha) \vec z'$, and by repeating this argument, it follows that $\alpha \vec z + (1-\alpha) \vec z'$ is itself in $Q(\vec y)$.
So let us focus on the component corresponding to node $v$ at time $\ell$.
Let $y_v[\ell]^*$ be the noiseless version of the received signal at $v$ at time $\ell$ with its complete binary expansion.
Since $\vec z$ and $\vec z'$ result in the same $\vec y$, we have that
\aln{
 y_v[\ell] = \left\lfloor y_v[\ell]^* + z_v[\ell] \right\rfloor_\rho = \left\lfloor y_v[\ell]^* + z'_v[\ell] \right\rfloor_\rho.
}
Now, if we assume wlog that $z_v[\ell] \leq z'_v[\ell]$, we have
\aln{
\left\lfloor y_v[\ell]^* + z_v[\ell] \right\rfloor_\rho \leq \left\lfloor y_v[\ell]^* + \alpha z_v[\ell] + (1-\alpha) z'_v[\ell] \right\rfloor_\rho  \leq  \left\lfloor y_v[\ell]^* + z'_v[\ell] \right\rfloor_\rho.
}
Thus, it follows that $y_v[\ell] = \left\lfloor y_v[\ell]^* + \alpha z_v[\ell] + (1-\alpha) z'_v[\ell] \right\rfloor_\rho$, and by replacing $z_v[\ell]$ with $\alpha z_v[\ell] + (1-\alpha) z'_v[\ell]$, we obtain a noise realization $\vec z''$ that is still in $Q(\vec y)$, and the lemma follows.
\end{proof}

\vspace{3mm}

\begin{lemmarep}{\ref{convdensity}}
Suppose $Y$ is a random variable with density $f$.
Let ${\tilde Y}_m = \lfloor {Y} \rfloor_m + U_m$, where $U_m$ is uniformly distributed in $\left(-2^{-m-1},2^{-m-1}\right)$ and independent from $Y$.
Then each ${\tilde{Y}}_m$ has a density $f_m$, and $f_m$ converges pointwise almost everywhere to $f$.
\end{lemmarep}

\vspace{3mm}

\begin{proof}
Since the density of $U\left(-2^{-m-1},2^{-m-1}\right)$ is $g(x) = 2^m \one{\{x \in \left(-2^{-m-1},2^{-m-1}\right)\}}$, $\tilde Y_m$ will have a density $f_m$ that can be written, for almost all $y$, as
\al{
f_m(y) & = E\left[ g\left(y- \lfloor Y\rfloor_m\right) \right] = 2^m E \left[ \one{\{y - \lfloor Y\rfloor_m \in \left(-2^{-m-1},2^{-m-1}\right)\}} \right] \nonumber \\
& = 2^m \Pr \left[ y - \lfloor Y\rfloor_m \in \left(-2^{-m-1},2^{-m-1}\right) \right] \nonumber \\
& = 2^m \Pr \left[ \lfloor Y\rfloor_m \in \left(y-2^{-m-1},y+2^{-m-1}\right) \right] \nonumber \\
& = 2^m \Pr \left[ \lfloor 2^m Y\rfloor \in  \left(y2^m-1/2,y2^m+1/2\right) \right] \nonumber \\
& = 2^m \Pr \left[ 2^m Y \in  \left( \lceil y2^m-1/2 \rceil, \lceil y2^m+1/2 \rceil \right) \right] \nonumber \\
& = 2^m \Pr \left[ Y \in  \left( 2^{-m} \lceil y2^m-1/2 \rceil, 2^{-m} \lceil y2^m+1/2 \rceil \right) \right] \nonumber \\
& = 2^m \int_{a_m}^{b_m} f(x) dx, \label{integral}
}
where $a_m = 2^{-m} \lceil y2^m-1/2 \rceil$ and $b_m = 2^{-m} \lceil y2^m+1/2 \rceil$.
Notice that we can write $b_m = a_m + 2^{-m}$.
Moreover, we have that
\al{ \label{boundsam}
y-2^{-(m+1)} \leq a_m < y + 2^{-(m+1)},
}
from which we have $a_m \to y$ as $m \to \infty$.
If we let $F(y)$ be the cdf of $Y$, then (\ref{integral}) can be written as
\al{ \label{limeq}
\frac{F(b_m) - F(a_m)}{2^{-m}} = \frac{F(a_m+2^{-m}) - F(a_m)}{2^{-m}} \defi q_m.
}
Our goal is to show that $q_m$ converges to $f(y)$ as $m \to \infty$ for almost all $y$.
Since by assumption $Y$ has an absolutely continuous distribution, $F(y)$ is differentiable almost everywhere, so it suffices to show that $q_m$ converges to $f(y)$ as $m \to \infty$ wherever $F(y)$ is differentiable and the derivative is $f(y)$.
Thus, we focus on a $y$ where $F'(y)=f(y)$.
Suppose by contradiction that $q_m$ does not converge to $f(y)$.
Then there must be an $\ep > 0$ and a subsequence $\{q_{m_i}\}_{i=1}^\infty$, such that one of the following
\al{
& q_{m_i} >  f(y) + \ep \label{up} \\
& q_{m_i} <  f(y) -\ep \label{down}
}
holds for all $i\geq 1$.
Suppose wlog that we have a subsequence $\{q_{m_i}\}_{i=1}^\infty$ for which (\ref{up}) holds for all $i\geq 1$.
We will now pick a further subsequence of $\{q_{m_i}\}_{i=1}^\infty$ in the following way.
First, we choose $K \in \Z_+$ large enough so that $f(y)/K < \ep$, and we define $K$ subsets of $\{1,2,...\}$ as
\aln{
S_j = \left\{ i \geq 1 \st y-2^{-(m_i+1)} + \frac{j-1}{K} 2^{-m_i} \leq a_{m_i} < y-2^{-(m_i+1)} + \frac{j}{K} 2^{-m_i} \right\},
}
for $j=1,2,...,K$.
From (\ref{boundsam}), the sets $S_1,...,S_K$ partition $\{1,2,...\}$, and we must be able to find some $S_j$ that is infinite.
Suppose $|S_t| = \infty$.
Then we have a subsequence $\{q_{m_i}\}_{i \in S_t}$, which we re-index as $\{q_{\ell_i}\}_{i=1}^\infty$.
For each of the elements in this subsequence we have \rescnt
\al{
q_{\ell_i} & = \frac{F(a_{\ell_i}+2^{-\ell_i}) - F(a_{\ell_i})}{2^{-\ell_i}} = \frac{F(a_{\ell_i}+2^{-\ell_i}) - F(y)}{2^{-\ell_i}} + \frac{F(y) - F(a_{\ell_i})}{2^{-\ell_i}} \nonumber \\
& = \frac{a_{\ell_i}+2^{-\ell_i}-y}{2^{-\ell_i}} \frac{F(a_{\ell_i}+2^{-\ell_i}) - F(y)}{a_{\ell_i}+2^{-\ell_i}-y} + \frac{y-a_{\ell_i}}{2^{-\ell_i}} \frac{F(y) - F(a_{\ell_i})}{y-a_{\ell_i}} \nonumber \\
& \leqnum \frac{2^{-\ell_i}(1+t/K-1/2)}{2^{-\ell_i}} \frac{F(a_{\ell_i}+2^{-\ell_i}) - F(y)}{a_{\ell_i}+2^{-\ell_i}-y} + \frac{2^{-\ell_i}(1/2-(t-1)/K)}{2^{-\ell_i}} \frac{F(y) - F(a_{\ell_i})}{y-a_{\ell_i}} \nonumber \\
& = (t/K+1/2) \frac{F(a_{\ell_i}+2^{-\ell_i}) - F(y)}{a_{\ell_i}+2^{-\ell_i}-y} + (1/2-(t-1)/K) \frac{F(y) - F(a_{\ell_i})}{y-a_{\ell_i}}, \label{bound}
} \rescnt
where \cnt follows since $F(y)$ is non-decreasing and ${\ell_i} \in S_t$.
Now, notice that the right-hand side in (\ref{bound}) has a limit, and, by taking the $\limsup$, we obtain
\aln{
\limsup_{i\to \infty} q_{\ell_i} & \leq (t/K+1/2) f(y) + (1/2-(t-1)/K) f(y) \\
& = \left(1+\frac1K\right) f(y) < f(y) + \ep.
}
But this is a contradiction because all $q_{m_i}$ satisfied $q_{m_i} > f(y) + \ep$, and $\{q_{\ell_i}\}_{i=1}^\infty \subset \{q_{m_i}\}_{i=1}^{\infty}$.
We conclude that we must have
\aln{
\lim_{m\to \infty} q_m = f(y),
}
which implies that $f_m(y) \to f(y)$ as $m\to \infty$.
%
%
%
%
\end{proof}

\bibliographystyle{unsrt}

\bibliography{refs}

\end{document}